\documentclass[11pt]{article}
\usepackage{fullpage}
\usepackage[utf8]{inputenc} 
\usepackage[T1]{fontenc}
\usepackage{url}
\usepackage{ifthen}
\usepackage{amsthm}
\usepackage{amssymb}
\usepackage{thmtools} 
\usepackage{thm-restate}
\usepackage[cmex10]{amsmath} 
\usepackage[noend]{algpseudocode}
\usepackage{etoolbox}
\usepackage{appendix}
\usepackage{float}
\usepackage{graphicx}
\usepackage{amsfonts}
\usepackage{xcolor}
\usepackage{balance}
\usepackage{biblatex}
\usepackage{lipsum}
\usepackage[colorlinks=true, linkcolor=black, citecolor=black, urlcolor=cyan]{hyperref}
\usepackage[nameinlink]{cleveref}
\AtBeginEnvironment{appendices}{\crefalias{section}{appendix}}

\makeatletter
\renewcommand{\footnoterule}{%
  \kern -3pt
  \hrule width \columnwidth
  \kern 2.6pt
}
\makeatother

\usepackage[inline,shortlabels]{enumitem}
\usepackage[ruled, vlined, linesnumbered]{algorithm2e}
\SetKwInOut{Input}{Input}
\SetCommentSty{textrm}

\bibliography{biblatex}

\newtheorem{proposition}{Proposition}
\newtheorem{lemma}{Lemma}

\newtheorem{claim}{Claim}

\theoremstyle{definition}
\newtheorem{remark}{Remark}

\begin{document}

\onecolumn

\title{\LARGE  Testing the Independent Set Property in Hypergraphs}


\author{}

\author{
    Elena Grigorescu \\
    Cheriton School of Computer Science\\ 
    University of Waterloo \\
    Email: elena-g@uwaterloo.ca
    \and
    Shreya Nasa \thanks{Supported in part by NSF Awards CCF-2127806, CCF-2228814, and ONR Award N00014-24-1-2695} \\
    Department of Computer Science \\
    Purdue University \\
    Email: snasa@purdue.edu
    \and Cameron Seth \\
    Cheriton School of Computer Science\\ 
    University of Waterloo \\
    Email: cjmpseth@uwaterloo.ca
    \and
}

\date{}

\maketitle

\begin{abstract}
    The optimal sample complexity of testing if an $n$-vertex graph has an independent set of size $\rho n$, or is $\varepsilon$-far from having an independent set of size $\rho n$, was established to be $\widetilde{O}(\rho^3/\varepsilon^2)$, in a notable result by Blais and Seth (SICOMP 2025). In contrast, for $q$-uniform hypergraphs, there is a significant gap between the best known upper and lower bounds, and there has been no progress on the problem for the last two decades. In this work, we prove a new upper bound of $\widetilde{O}\!\left(\frac{q\rho^{2q-3}}{\varepsilon^2 (q-2)!^2}\right)$ on the sample complexity of testing the $\rho$-independent set property. The previous best known upper bound was $\widetilde{O}\!\left(\frac{2^q q! \rho^{2q}}{\varepsilon^3}\right)$, due to Langberg (RANDOM 2004). This establishes the optimal dependence on $\varepsilon$ and gives an exponential improvement in the dependence on $q$.
    We prove our result via a new application of the hypergraph container method.
\end{abstract}

\newpage

\section{Introduction}

Graph and hypergraph property testing address the question of when one can distinguish between the case that a (hyper)graph has some property and the case that it is \emph{far} from having the property, by inspecting only a small fraction of the input (hyper)graph.
In this work we study the classic problem of testing whether a hypergraph has a large independent set in the dense hypergraph model.
We say that a $q$-uniform hypergraph $H$ on $n$ vertices has the $\rho$-independent set property if $H$ has an independent set $I$ of size $\rho n,$ and $H$ is considered to be $\varepsilon$-far from having a $\rho n$ independent set if at least $\varepsilon n^q$ edges must be removed from $H$ to obtain a hypergraph with an independent set of size $\rho n$.

An $\varepsilon$-\emph{tester} for $\rho$-independence is a bounded error randomized algorithm that takes a random sample $S \subseteq V$ of vertices, inspects $H[S],$ and, if $H$ has a $\rho n$ independent set, accepts with probability at least $\delta,$ and if $H$ is $\varepsilon$-far from having a $\rho n$ independent set, accepts with probability at most $\gamma,$ for some positive constants $\delta > \gamma$.\footnote{The standard success acceptance probabilities are $\delta=2/3$ and $\gamma=1/3,$ but so long as $\delta > \gamma$ the probabilities can be boosted to $2/3$ and $1/3$ by standard error reduction techniques. We note that if $\delta=1$ the tester is called a one-sided error tester, and otherwise it's called a two-sided error tester.}
The sample size $|S|$ is known as the sample complexity of the tester, and the sample complexity of testing the $\rho n$ independent set property is the minimum sample complexity of any such $\varepsilon$-tester for the property.

Independent set testing on graphs (when $q=2$) is one of the classic graph properties which was studied by Goldreich, Goldwasser and Ron when they introduced the notion of graph property testing \cite{goldreichPropertyTesting1998}.
They showed that it is possible to $\varepsilon$-test the $\rho$-independent set property in graphs with sample complexity $\widetilde{O}(\rho/\varepsilon^4).$
Feige, Langberg and Schechtman later improved the bound to $\widetilde{O}(\rho^4/\varepsilon^3)$ \cite{feigeCliqueTesting2004}.
The problem of finding the optimal sample complexity in graphs remained open for nearly 20 years until a recent work by Blais and Seth improved the bound to $\widetilde{O}(\rho^3/\varepsilon^2)$ by introducing a new connection between property testing and the graph container method \cite{blais2025testing}.
The result by Blais and Seth is optimal up to logarithmic (in $1/\varepsilon$) factors based on a lower bound in \cite{feigeCliqueTesting2004}.

Finding the optimal sample complexity of $\rho$-independent set testing in $q$-uniform hypergraphs remains open.
Generalizing the results of \cite{feigeCliqueTesting2004}, Langberg showed that there is an $\varepsilon$-tester for $\rho$-independent set testing in hypergraphs with sample complexity of $\widetilde{O}\left(\frac{2^q q! \rho^{2q}}{\varepsilon^3}\right)$ \cite{Langberg04}.
Additionally, Langberg gave a lower bound on the sample complexity of $\widetilde{\Omega}\left(\frac{\rho^{2q-1}}{(q!)^2 \varepsilon^2}\right).$
This leaves the natural open question: what is the optimal dependence on $\varepsilon$ for the sample complexity of testing $\rho$-independent sets in $q$-uniform hypergraphs?
Further, there is also an exponential gap in the $q$ dependence between the upper and lower bounds.
It's worth noting that even though the upper bound has a $\rho^{2q}$ in the numerator and the lower bound has a $\rho^{2q-1}$ in the numerator, the problem is only non-trivial when $\varepsilon < \frac{\rho^q}{q!}$ and so the lower bound is strictly smaller than the upper bound.

\subsection{Our Results}
We give a new upper bound on the sample complexity of testing the $\rho $-independent set property in hypergraphs.

\begin{restatable}{theorem}{upperBound}
\label{thm1}
    The sample complexity for testing the $\rho$-independent set property is $\widetilde{O}\left(\frac{q \rho^{2q-3}}{\varepsilon^2 (q-2)!^2}\right)$.
\end{restatable}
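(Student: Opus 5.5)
The tester is the canonical one: sample a set $S$ of $s=\widetilde{\Theta}\!\left(\frac{q\rho^{2q-3}}{\varepsilon^2 (q-2)!^2}\right)$ vertices, and accept iff $H[S]$ contains an independent set of size roughly $(\rho-\eta)s$, for a suitable slack $\eta$. Completeness is a Chernoff bound. If $I$ is a $\rho n$ independent set in $H$, then $|S\cap I|$ is concentrated around $\rho s$, and $S\cap I$ is independent in $H[S]$. All the work is in soundness. There we must show that if $H$ is $\varepsilon$-far, then with good probability \emph{every} set $T\subseteq S$ of size $(\rho-\eta)s$ spans an edge. A union bound over all such $T$ costs $\binom{s}{\rho s}$, which is too expensive. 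Following the Blais--Seth strategy for graphs, I would replace this union bound by a union bound over a small family of \emph{containers}, produced by a hypergraph container lemma adapted to the sampling setting.

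\textbf{Step 1: a container/fingerprint algorithm.} Run a greedy process on the sampled set. Order the vertices of a candidate independent set $T\subseteq S$ by their degree in the current link structure. Repeatedly pick the max-degree vertex $v\in T$, add it to a fingerprint $F$, and delete from the candidate pool every vertex that would close an edge with $F$.

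For $q$-uniform hypergraphs this must be done in levels. Following the Balogh--Morris--Samotij / Saxton--Thomason approach, one tracks the $(q-1)$-uniform, then $(q-2)$-uniform, and so on, link hypergraphs generated by the fingerprint. Because $H$ is $\varepsilon$-far, every $\rho n$ set spans at least $\varepsilon n^q$ edges. Averaging then gives many vertices of large degree, so each fingerprint vertex shrinks the container by a fraction of about $\varepsilon/\rho^{q-1}$ (suitably normalized). Hence fingerprints of size about $\frac{\rho^{q-1}}{\varepsilon}\log(1/\varepsilon)$ suffice. I expect the $(q-2)!$ factors to enter through counting ordered versus unordered $(q-1)$- and $(q-2)$-tuples in the links.

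\textbf{Step 2: sampling the fingerprint.} As in Blais--Seth, I would split $S=S_1\cup S_2$. The fingerprint is drawn from $S_1$, and the container it determines is a deterministic function $C(F)\subseteq V$ of $F$ and $H$. The number of possible fingerprints is at most $\binom{|S_1|}{\le f}$, so it contributes $\exp(f\log s)$ to the union bound. The algorithm stops either when the container has size below $(\rho-\eta)n$, or when the remaining large set is dense. In the dense case, $\varepsilon$-farness gives a large independent-set-free structure.

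\textbf{Step 3: concentration on $S_2$.} For each fixed container $C$ with $|C|<(\rho-\eta/2)n$, a Chernoff bound gives $|S_2\cap C|<(\rho-\eta)|S_2|$ except with probability $\exp(-\eta^2 s/\rho)$. Every independent $T$ of $H[S]$ lies in $F\cup C(F)$, so a union bound over the $\exp(\widetilde{O}(f))$ containers finishes the argument provided $\eta^2 s/\rho\gg f\log s$. Taking $\eta\approx \varepsilon\cdot(q-2)!/\rho^{q-2}$ (the natural scale at which removing an $\eta$-fraction of vertices can destroy about $\varepsilon n^q$ edges) yields the stated bound.

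\textbf{Main obstacle.} The hard part is Step 1 in the hypergraph setting: proving a supersaturation statement strong enough that every greedy fingerprint step removes an $\Omega(\varepsilon(q-2)!/\rho^{q-2})$ fraction of the container. It also has to do so uniformly across the link levels, so that the $q$-dependence stays polynomial rather than $2^q q!$. To get this I would first try a degree-codegree cleanup: repeatedly delete $(q-2)$-sets of abnormally high codegree, charging them to the $\varepsilon n^q$ budget. I would then apply the Saxton--Thomason co-degree function argument to the cleaned hypergraph.
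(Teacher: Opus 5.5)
There is a genuine gap, and it is exactly where the $\varepsilon^{-2}$ must come from. In Step 3 you union-bound over all fingerprints of size up to $f\approx\frac{\rho^{q-1}}{\varepsilon}\log(1/\varepsilon)$ against one fixed slack $\eta$. That forces $s\gtrsim\rho f\log s/\eta^2$. With $\eta\approx\varepsilon(q-2)!/\rho^{q-2}$ this is $\widetilde{\Theta}\bigl(\rho^{3q-4}/(\varepsilon^{3}(q-2)!^{2})\bigr)$, an $\varepsilon^{-3}$ bound, not the stated one; getting the stated bound this way would need $f=\widetilde{O}(q)$. What is missing is the fingerprint/container \emph{tradeoff} behind Blais--Seth and the paper's \Cref{lem1}. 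For every independent $I$ there is $t\le\frac{8\rho^{q-1}}{\varepsilon(q-1)!}$ such that the fingerprint has size $(q-1)t$ and $|C_t|\le(\rho-t\eta')n$, where $\eta'=\frac{\varepsilon(q-2)!}{4\rho^{q-2}\ln(\rho^{q-1}/(\varepsilon(q-1)!))}$. The container deficit grows linearly in the fingerprint size, so large fingerprints are paid for by smaller containers. Hence the Chernoff exponent, of order $t^2\eta'^2s/\rho$, beats the union-bound cost $(q-1)t\ln s$ for all $t$ simultaneously once $s\gtrsim q\rho\ln s/\eta'^2$, which is the theorem. The factor $q$ there comes from the $q-1$ fingerprint vertices per round, not from a total fingerprint size.

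Step 1 is also left open, and the route you sketch is essentially the standard Balogh--Morris--Samotij/Saxton--Thomason link procedure on the whole current container. The paper shows this falls short: \Cref{remark:oldApproachWeakness} gives an $\varepsilon$-far hypergraph on which it needs about $\rho^{-(q-2)}$ rounds, losing a $\rho^{-\Theta(q)}$ factor, and your codegree cleanup is not shown to repair this. The paper's key change has three parts:
\begin{enumerate}[(i)]
\item choose $v_t\in I\cap C_{t-1}$ maximizing its $q$-degree over all $(\le\rho n)$-subsets $D\subseteq C_{t-1}$;
\item run the inner loop $\ell=q-1,\dots,1$ on links inside that $D$;
\item at each level, delete every vertex of higher degree than the chosen one.
\end{enumerate}
All degree bounds then refer to $\rho n$-sets, which are the only sets $\varepsilon$-farness controls. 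Pigeonhole gives a round $j\le t$ that removes at most $n/t$ vertices. Induction over the link levels with the averaging bound (\Cref{prop1}) then yields \Cref{prop2}: every $(\le\rho n)$-subset of $H[C_t]$ has maximum degree at most $\frac{2(q-1)}{\rho t}\binom{\rho n-1}{q-1}$. \Cref{lem1} follows by contrapositive. If $|C_t|>(\rho-t\eta')n$ for all $t\le t_1$, complete $C_{t_1}$ to a $\rho n$-set using the layers $C_{t-1}\setminus C_t$. These degree bounds and a harmonic sum (the source of the $\ln$ in $\eta'$) show this set spans fewer than $\varepsilon n^q$ edges, a contradiction.

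Two smaller points. Your $S_1/S_2$ split does not work, because the fingerprint of $T\subseteq S$ is chosen greedily from $T$ and may lie in $S_2$. The paper instead union-bounds over the $\binom{s}{t'}$ positions of the fingerprint in the sample sequence and applies the hypergeometric Chernoff bound to the other $s-t'$ draws. Also, the inequality in Step 3 is reversed: containers must lie below the acceptance threshold, not at $(\rho-\eta/2)n$. The slack is unnecessary anyway, since the paper accepts iff $H[S]$ has a $\rho s$ independent set and gets completeness from the hypergeometric median.
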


Our analysis achieves the optimal dependence on $\varepsilon$, matching the lower bound of Langberg, while also improving the dependence on $q$ exponentially compared to previous results.

There is a gap between our upper bound and the lower bound of Langberg of roughly $\frac{\rho^2}{q^5}.$
We note that the lower bound result of Langberg can be improved slightly to $\frac{\rho^{2q-1}}{((q-1)!)^2 \varepsilon^2}$ using the same proof approach (see \Cref{appendix1}), and so this reduces the gap between upper and lower bounds to roughly $\frac{\rho^2}{q^3}.$
Achieving the optimal $\rho$ dependence remains an open problem, however we show in \Cref{remark:containerLemmaTightness} that our analysis cannot be tightened to achieve the optimal $\rho$-dependence.

We note that, while our bound applies for any $q,$ in the regime that $q$ is non-constant, there is an alternate distance measure.
We could instead say a $q$-uniform hypergraph is $\gamma$-far from having a $\rho n$ independent set if at least $\gamma \binom{n}{q}$ edges must be modified to make the hypergraph have the property, similar to what was studied by \cite{blais2024new} for satisfiability testing.
Observe that $\gamma \binom{n}{q} \approx \frac{\gamma}{q!}n^q,$ and so if a hypergraph is $\gamma$-far from having a $\rho n$ independent set under this alternate distance measure then it is $\varepsilon \approx \frac{\gamma}{q!}$-far in the distance measure we use in this paper.
Using \Cref{thm1} with this change of variable we get an upper bound of roughly $\frac{q^5 \rho^{2q-3}}{\gamma^2}$ for testing $\rho$-independence under this alternate notion of distance.

\subsection{Overview of Techniques}
To prove \Cref{thm1} we analyze the natural testing algorithm that takes a random sample $S \subseteq V$ of vertices and accepts if and only if $H[S]$ contains an independent set of size $\rho |S|.$
This is a two-sided error tester, which is necessary for any tester for the $\rho$-independent property, as it is not a semihereditary property \cite{alon2008characterization}.
As is common for property testing algorithms, the main challenge is showing that the tester rejects hypergraphs that are $\varepsilon$-far from having a $\rho n$ independent set with high probability.

The main tool in our analysis is a new application of the hypergraph container method.
Blais and Seth first applied the graph container method to analyze property testers for the $\rho$-independent set and $k$-colorability properties on graphs \cite{blais2025testing}. They further used the hypergraph container method to analyze the sample complexity for the satisfiability problem and hypergraph coloring \cite{blais2024new}. We adapt their hypergraph container method algorithm to analyze the sample complexity of $\rho$-independence testing in hypergraphs. We first briefly describe the container method and the approach used by Blais and Seth for graphs, and then discuss the main challenges with adapting the container method to analyzing testers for the $\rho$-independent set property on hypergraphs.

\subsubsection{The Container Method}
The container method is a tool for characterizing independent sets in graphs with specific structure, commonly regular or nearly regular graphs.
While in general a graph on $n$ vertices may have up to $2^n$ independent sets, the container method can be used to show that for a graph $G$ satisfying some specific structure, there is a relatively small collection of containers, where each container is a subset of vertices and each container is relatively small, such that every independent set is a subset of a container.

The natural way to construct the collection of containers is with a container generating algorithm, first used by Kleitman and Winston \cite{KleitmanWinston1982}, which works as follows.
Given a graph $G=(V,E)$ and an independent set $I\subseteq V,$ the goal of the container generating algorithm is to find a small fingerprint $F \subseteq I$ that can be used to construct a small container $C \subseteq V$ such that $I \subseteq C.$
In graphs, the natural approach is an iterative greedy approach that first initializes the candidate container to $V,$ and in each step adds a vertex $v \in I$ to the fingerprint which has highest degree in the current container candidate set.
Then, all the neighbours of $v$ are removed from the current container because they cannot be in the independent set.
Further, any vertex with degree higher than $v$ in the current container can be removed as well because they also cannot be in the independent set (otherwise they would have been selected to the fingerprint).
The container generating procedure stops once the container is sufficiently small.
This procedure is run on every independent set to construct the collection, however, since each constructed fingerprint corresponds to a unique container, and each fingerprint is relatively small, then there are a small number of containers (corresponding to the total number of possible fingerprints).

\subsubsection{Overview of Blais--Seth Approach}
Blais and Seth showed that the container method applies very naturally to graphs that are $\varepsilon$-far from having a $\rho n$ independent set \cite{blais2025testing}.
They showed that when $G$ is $\varepsilon$-far from having a $\rho n$ independent set, there is a collection of containers, where each container is a subset of vertices, such that every independent set is a subset of a container.
They show that each container is small (strictly smaller than $\rho n$) and that there are not too many of them (at most roughly $n^{1/\varepsilon}$).

Blais and Seth then use this collection of containers to analyze the natural testing algorithm that takes a random sample $S \subseteq V$ of vertices and checks if $G[S]$ has an independent set of size $\rho |S|.$
In particular, when $G$ is $\varepsilon$-far from having a $\rho n$ independent set, they use the collection of containers to upper bound the probability that $G[S]$ has a $\rho |S|$ independent set as follows.
By the container statement, since every independent set is a subset of a container, $G[S]$ only has a $\rho |S|$ independent set if at least $\rho |S|$ vertices of $S$ are drawn from a specific container.
The probability that $\rho |S|$ vertices are drawn from a specific container can be upper bounded by a Chernoff bound, and then using a union bound they upper bound the probability that $\rho |S|$ vertices are drawn from any container, thereby upper bounding the probability that $G[S]$ has a $\rho |S|$ independent set.

In order to get the optimal bound for $\rho n$ independent set testing, Blais and Seth are crucially able to show that there is a trade off in the container and fingerprint size.
In particular, they show that large containers have small fingerprints and vice versa.
A key step in their proof is an argument that roughly says the following: given a graph $G=(V,E)$ that is $\varepsilon$-far from having an independent set and some independent set $I$, if the container generating procedure ends up at a very large (and sparse) container $C \subseteq V,$ then either $|E(C,V \setminus C)|$ is relatively large, or $G[V \setminus C]$ is very dense.
They show that in either case a large number of vertices would be removed from the container in a single step, and, further, this holds at every step of the container procedure.
Since many vertices are removed in each step of the procedure then the procedure must terminate in a relatively small number of steps, resulting in a relatively small fingerprint.

\subsubsection{First Challenge with Adapting to the Hypergraph Problem}
One immediate challenge for hypergraphs is that the container generating procedure is not as simple as in the graph case.
In graphs, selecting a fingerprint vertex to be the vertex in $I$ with highest degree in the current container works very well because if $v \in I$ then the neighbours of $v$ must not in the independent set and so can immediately be removed from the current container.
However, even for $3$-uniform hypergraphs this does not work: if $v \in I$ then vertices that share an edge with $v$ may still be in the independent set.

We address this challenge using a similar approach as used by Blais and Seth in their follow-up work on testing satisfiability \cite{blais2024new}, and also in hypergraph container method literature (see \cite{baloghIndependentSetsHypergraphs2015,saxtonThomasonHypergraphContainers2015} for example).
In particular, in each iteration of our container generating procedure, we have an inner loop which takes $q$-edges and turns them into $(q-1)$-edges, and repeats on the corresponding $(q-1)$-uniform hypergraph.
For example, in the $3$-uniform hypergraph case, if $v \in I$ then for any pair $\{u,w\}$ for which $\{v,u,w\}$ is an edge in $H,$ at most one of $u$ or $w$ can be in $I.$
So, after selecting a vertex $v \in I$ to the fingerprint, the container procedure will construct a graph (i.e.\ a $2$-uniform hypergraph) where the edges are all pairs $\{u,w\}$ such that $\{v,u,w\}$ is an edge in $H$.
The procedure then can operate on the graph as before by selecting another vertex in $I$ and removing neighbours (see \Cref{sect:NewContainerLemma} for more details).

\subsubsection{Second Challenge with Adapting to the Hypergraph Problem}
The second key challenge in adapting the Blais--Seth approach to get nearly tight bounds for hypergraph $\rho$-independent set testing is that the argument about the trade off between the container and fingerprint size cannot be generalized directly to hypergraphs.
As mentioned above, they show that given a graph $G$ that is $\varepsilon$-far from having an independent set and some independent set $I$, if the container procedure ends up at a very large (and sparse) container $C,$ then either $E(C,V \setminus C)$ is relatively large, or $G[V \setminus C]$ is very dense.
This step of the proof does not generalize naturally to hypergraphs because the edges between $C$ and $V \setminus C$ can now take many forms (i.e. $i$ endpoints in $C$ and $(q-i)$ endpoints in $V \setminus C$ for any $1 \leq i \leq q-1$).
In fact, using a standard hypergraph container generating procedure to analyze a tester for $\rho$-independent set would result in a sample complexity that is larger than our upper bound by at least roughly $1/\rho^q$ (see \Cref{remark:oldApproachWeakness} for an example to demonstrate this).

In order to get our tighter upper bound on the sample complexity we change the hypergraph container procedure following ideas from \cite{blais2024new}.
Instead of selecting a vertex with highest degree in the hypergraph to the fingerprint, we select a vertex to the fingerprint with highest degree in some specific $\rho n$-subhypergraph.
Then the inner loop focuses on this specific $\rho n$-subhypergraph.  We note that Blais and Seth \cite{blais2024new} use a similar approach in the hypergraph container method by focusing on a smaller $n$-vertex subhypergraph in a $kn$-vertex graph for their CSP hypergraph construction.

We note that once we make this change to the container procedure, the proof approach is similar to that used by Blais and Seth to study $k$-colorability and satisfiability \cite{blais2024new}.
Roughly speaking the proof goes by contradiction.
Given a hypergraph $H$ that is $\varepsilon$-far from having a large independent set, and an independent set $I,$ we assume that for all steps of the procedure the container is larger than desired.
We prove a proposition which shows that the max degree in the container decreases in each step, and use this to reconstruct a $\rho n$-subhypergraph that demonstrates that $H$ is not $\varepsilon$-far from having a $\rho n$ independent set, a contradiction. Since we reconstruct a subhypergraph of size $\rho n$ to prove the contrapositive, we obtain a tighter bound by considering the maximum $q$-degree of a vertex in any $\rho n$-induced subhypergraph in the container.

\section{New Container Lemma}
\label{sect:NewContainerLemma}

In this section, we prove a new container lemma for hypergraphs, which will be used in \Cref{sect:thmProof} to prove our upper bound $\widetilde{O}\left(\frac{q \rho^{2q-3}}{\varepsilon^2 (q-2)!^2}\right)$ on the sample complexity. We start by describing our hypergraph container procedure and formalizing some basic properties of the fingerprints and the containers which is then used to prove the new container lemma for $q$-uniform hypergraphs.
We note that Blais and Seth \cite{blais2024new} use similar ideas to prove their upper bound on the sample complexity of CSP (Constraint Satisfaction Problem).

We defer the proof for the lower bound $\Omega\left(\frac{\rho^{2q-1}}{\varepsilon^{2} (q-1)!^{2}}\right)$ using the construction by Langberg \cite{Langberg04} to \Cref{appendix1}.

\subsection{The Container Method}
\label{sect:ContainerMethod}

Our hypergraph container algorithm builds on the work of Kleitman and Winston \cite{KleitmanWinston1982} and of Sapozhenko \cite{Sapozhenko2005}. The key idea behind the container method is to identify a small set of vertices called the \textit{fingerprint} vertices which uniquely identify its \textit{containers} (upto a predefined ordering of vertices) for an input independent set $I$ in a $q$-uniform hypergraph $H = (V, E)$. We denote the fingerprint and container at $t^{th}$ iteration of the algorithm as $F_t(I)$ and $C_t(F_t(I))$ respectively. We may use the shorthand notation $F_t$ and $C_t(F_t)$ when the underlying independent set is clear from the context.

The algorithm (see Algorithm \ref{alg:main-alg}) starts with initializing the entire vertex set as the container $C_0$ and an empty set of vertices as the fingerprint $F_0$. We add vertices to the fingerprint and remove vertices from its corresponding container as the algorithm progresses. At $t^{th}$ iteration, the algorithm chooses the vertex $v_{t}$ from the container $C_{t-1}$ as the vertex in the input independent set $I \subseteq V$ with the maximum $q$-degree in any $\rho n$-induced subhypergraph of the container $C_{t-1}$ and $\mathcal{D}_{t,q}$ as the corresponding subhypergraph. We break ties by giving higher precedence to lower ranked vertices in the ordering $\pi$. We save all the vertices which have higher $q$-degree than the chosen vertex $v_t$ in any $\rho n$ induced subhypergraph induced by vertices in the container $C_{t-1}$ to the set $X_q$. We then focus on this $\rho n$ induced hypergraph to choose the next $(q-1)$ fingerprint vertices. We initialize the fingerprint for the current iteration $F^*$ as $F_{t-1}$ at the start of the inner loop.

For each $\ell^{th}$ (as $\ell$ goes from $q-1$ to 1) iteration of the inner loop, note that $\mathcal{D}_{t, \ell + 1}$ is an $(\ell+1)$-uniform hypergraph with at most $\rho n - (q - (\ell+1))$ vertices. We choose the next fingerprint vertex $v_{t,\ell}$ from $I \setminus F^*$ as the vertex with the maximum $(\ell + 1)$-degree in $\mathcal{D}_{t, \ell + 1}$ and add it to $F^*$. Note that we define $\deg_{\mathcal{D}_{t,\ell+1}}(v)$ for a vertex $v \notin \mathcal{D}_{t,\ell+1}$ as zero. We then construct $\mathcal{D}_{t, \ell}$ by removing $v_{t,\ell}$ from the vertex set of $\mathcal{D}_{t, \ell + 1}$ and using the edges in $\mathcal{D}_{t, \ell + 1}$ incident to $v_{t,\ell}$ minus the vertex $v_{t,\ell}$. We also save the vertices with higher $(\ell + 1)$-degree than the current fingerprint vertex $v_{t,\ell}$ in the current hypergraph $\mathcal{D}_{t,\ell+1}$ to the set $X_\ell$ to be removed later. Note that these vertices cannot be a part of the independent set otherwise they would have been chosen in the fingerprint before the current vertex. We then construct the container set for the next iteration $C_t$ by removing all sets $X_{\ell}$, all vertices with a 1-edge $V[\mathcal{D}_{t,1}]$, and all the new fingerprint vertices $v_{t,\ell}$ from $C_{t-1}$.

\begin{algorithm}[!t]
    \caption{Fingerprint \& Container Generator}
    \label{alg:main-alg}
    \KwIn {A $q$-uniform hypergraph $H = (V, E)$, a parameter $\rho < 1$, an ordering of vertices $\pi$, and an independent set $I \subseteq V$}
    Initialize $F_0 \leftarrow \emptyset$ and $C_0 \leftarrow V(H)$\;
    \For{$t = 1, 2, \dots, |I|/(q - 1)$}{
        \tcp{Choose $v_{t}$ as the vertex with maximum $q$-degree in any $\rho n$ induced subhypergraph of $C_{t-1}$ and $\mathcal{D}_{t,q}$ as the corresponding $\rho n$ induced subhypergraph. We break ties by giving higher precedence to lower ranked vertices in the ordering $\pi$.}
        $(v_{t}, \mathcal{D}_{t,q}) = \underset{{v \in I \cap C_{t-1}, D \subseteq C_{t-1},\, |D| \le \rho n}}{\arg\max} \deg_D(v)$ \;
        \tcp{Keep track of all vertices with higher $q$-degree than $v_{t}$ in any $\rho n$ induced subhypergraph of $C_{t-1}$ to remove later}
        $X_{q} \leftarrow \{ w \in V[C_t] : \deg_{C_{t-1}}^{\leq \rho n}(w) > \deg_{C_{t-1}}^{\leq \rho n}(v_{t}) \}$\;
        \tcp{Initialize the fingerprint for the current iteration}
        $F^* \leftarrow F_{t-1}$ \\
        \For{$\ell = q-1, \dots, 1$}{    
            $v_{t,\ell} \leftarrow$ the vertex in $I \setminus F^*$ with largest $(\ell + 1)$ degree in $\mathcal{D}_{t,\ell+1}$; \\
            $F^* \leftarrow F^* \cup v_{t,\ell}$ \\
            \tcp{Keep track of all vertices with higher $(\ell+1)$-degree than $v_{t,\ell}$ in $\mathcal{D}_{t,\ell+1}$ to remove later}
            $X_{\ell} \leftarrow \{ w \in V[\mathcal{D}_{t,\ell+1}] : \deg_{\mathcal{D}_{t,\ell+1}}(w) > \deg_{\mathcal{D}_{t,\ell+1}}(v_{t,\ell}) \}$\;        
            \tcp{Construct $D_{t,\ell}$ using vertices of $D_{t,\ell+1}$ minus $v_{t, \ell}$ and edges incident on $v_{t, \ell}$}
            $V[\mathcal{D}_{t,\ell}] \leftarrow V[\mathcal{D}_{t,\ell+1}] \setminus \{v_{t,\ell}\}$\;
            $E[\mathcal{D}_{t,\ell}] \leftarrow \{ e \setminus \{v_{t,\ell}\} : e \in E[\mathcal{D}_{t,\ell+1}] \text{ and } v_{t,\ell} \in e \}$\;
        }
        \tcp{Add $v_{t,q-1}, \dots, v_{t,{1}}$ to the fingerprint}
        $F_t \leftarrow F^*$\;
        \tcp{Update container set by removing all sets $X_{\ell}$, all vertices with a 1-edge, and the new fingerprint vertices}
        $C_t \leftarrow C_{t-1} \setminus \left( \bigcup_{\ell=1}^{q} X_\ell \cup \{v \in C_{t-1} : \{v\} \in E[\mathcal{D}_{t,1}]\} \cup \{v_{t,1}, \dots, v_{t,q-1}\} \right)$\;
    }
   \KwRet {$F_1, \dots, F_{|I|/(q-1)} \text{ and } C_1, \dots, C_{|I|/(q-1)}$}
\end{algorithm}

By construction of the algorithm, the fingerprints and containers satisfy 
\begin{align*}
    &F_0 \subseteq F_1 \subseteq F_2 \dots \subseteq F_{|I|/(q-1)} \subseteq I, \\
    I \subseteq \, C_t(F_t) &\cup F_t \text{ and } 
    C_{|I|/(q-1)} \subseteq C_{|I|/(q-1) -1} \dots \subseteq C_0 = V  
\end{align*}

We now prove some important properties of the containers that we use later in our proof to compute an upper bound on the sample complexity of $\rho$-independent set testing.
\begin{proposition}[Uniqueness of Containers]
    For any hypergraph $H = (V, E)$ with a predefined ordering of vertices $\pi$, any independent set $I$ in $H$, and any $t$, the fingerprint $F_t(I)$ and container $C_t(I)$ of $I$ satisfy
    \[  C_t \big( F_t(I) \big) = C_t(I). \]
\end{proposition}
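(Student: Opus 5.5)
Our plan is to make precise what $C_t(F)$ means for an arbitrary set $F$, namely the container obtained by running Algorithm~\ref{alg:main-alg} with the fingerprint choices dictated by $F$ instead of by $I$. We then show, by induction on $t$, that for $F=F_t(I)$ this replay reproduces exactly the run of the algorithm on $I$. The key observation is that the algorithm consults the independent set $I$ only at the steps where a fingerprint vertex is chosen: the choice of $(v_t,\mathcal{D}_{t,q})$ and of each $v_{t,\ell}$. Every other object ($X_q$, $X_\ell$, $\mathcal{D}_{t,\ell}$, the set of vertices carrying a $1$-edge, and $C_t$) is a deterministic function of $H$, $\rho$, $\pi$, the previous container $C_{t-1}$, and the chosen vertices.

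For the induction we would prove the following stronger statement. For every $s\le t$, running the procedure using $F_t(I)$ in place of $I$ for the first $s$ rounds produces the same vertices $v_s, v_{s,q-1},\dots,v_{s,1}$, the same hypergraphs $\mathcal{D}_{s,\ell}$, the same fingerprint $F_s$, and the same container $C_s$. The base case $C_0=V$, $F_0=\emptyset$ is trivial.

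For the inductive step, suppose the runs agree up to round $s-1$, so the containers $C_{s-1}$ and fingerprints $F_{s-1}$ coincide. On input $I$, $v_s$ is the $\pi$-first maximiser of $\max_{D\subseteq C_{s-1},|D|\le\rho n}\deg_D(v)$ over $v\in I\cap C_{s-1}$. The replay instead maximises over $F_t(I)\cap C_{s-1}$, which is a subset of $I\cap C_{s-1}$ and contains $v_s$, since $v_s\in F_s\subseteq F_t$ and $v_s\in C_{s-1}$. A maximiser over a superset that lies in the subset is also the $\pi$-first maximiser over the subset, so the replay picks the same $v_s$. We handle the tie-breaking for $\mathcal{D}_{s,q}$ by fixing a canonical rule, for instance the lexicographically least set under $\pi$; both runs then select the same $\mathcal{D}_{s,q}$. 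The same argument applies to each inner choice $v_{s,\ell}$, taken over $I\setminus F^*$ versus $F_t(I)\setminus F^*$: the selected vertex lies in $F_t(I)$, and $F^*$ agrees in both runs by the inner induction on $\ell$. Since all remaining quantities are deterministic functions of the agreed data, $C_s$ and $F_s$ coincide, which completes the induction.

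The main obstacle is the restriction to the subset. For it to go through, the maximiser must in fact lie in $F_t(I)\cap C_{s-1}$ at the round where it is chosen. We also need the replay's candidate set to be exactly $F_t(I)$ minus the vertices already used, and not all of $F_t(I)$. The worry is that a vertex chosen in a later round $s'>s$ might have a larger degree than $v_s$ at round $s$. This cannot happen on input $I$, because every vertex of $F_t\subseteq I$ is also a candidate in the original run, so the original maximiser over $I$ already dominates it. We also need the algorithm's tie-breaking to be fully deterministic, including in the choice of $\mathcal{D}_{t,q}$; we would state this convention explicitly before running the induction.
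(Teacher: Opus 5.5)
Your proposal is correct and is essentially the paper's argument. The run on $F_t(I)$ makes the same $\pi$-tie-broken choices as the run on $I$, because a maximiser over $I$ that lies in $F_t(I)\subseteq I$ is also the maximiser over $F_t(I)$; hence the same vertices are removed. You spell this out as an explicit induction and make canonical the tie-break for $\mathcal{D}_{t,q}$, which the paper leaves implicit.

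The one step that needs a line of justification is $v_s\in F_s$, because the algorithm only adds $v_{s,q-1},\dots,v_{s,1}$. You should note that $v_{s,q-1}=v_s$, which follows from the outer maximality of $v_s$ and $\pi$-tie-breaking. Any tied inner candidate of positive degree lies in $\mathcal{D}_{s,q}\subseteq C_{s-1}$, so it is also an outer candidate and cannot precede $v_s$ in $\pi$. In the degree-zero case, use $I\setminus F_{s-1}=I\cap C_{s-1}$.
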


\begin{proof}
    For $t \leq \frac{|I|}{q-1}$, the algorithm chooses the same fingerprint vertices in the same order for $F_t(I)$ and $I$ in the first $t-1$ rounds of the algorithm, breaking ties based on the given ordering of vertices $\pi$, hence the vertices removed from the container are also identical, i.e., $C_t \big( F_t(I) \big) = C_t(I)$. For $t > \frac{|I|}{q-1}$, $F_t(I) = I$, hence the inequality is satisfied. 
\end{proof}

We prove another important property of any $\ell$-uniform hypergraph which we refer as the standard averaging bound \cite{blais2024new}.

\begin{proposition}[Standard Averaging Bound]
\label{prop1}
    Let $H = (V, E)$ be an $\ell$-uniform hypergraph with $\ell \ge 2$. Then there are at least $\frac{\ell |E| - d |V|}{\binom{|V|-1}{\ell-1} - d} > \frac{\ell |E| - d \dot |V|}{\binom{|V|-1}{\ell-1} }$ vertices in $H$ whose degree is larger than $d$.
\end{proposition}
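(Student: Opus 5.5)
We argue by double counting the degree sum. Let $N=|V|$ and let $k$ be the number of vertices whose degree exceeds $d$. Every vertex has degree at most $\binom{N-1}{\ell-1}$, since each edge containing a fixed vertex $v$ is determined by choosing its other $\ell-1$ vertices from the remaining $N-1$. The handshake identity for $\ell$-uniform hypergraphs gives $\sum_{v\in V}\deg(v)=\ell|E|$, because each edge is counted once at each of its $\ell$ vertices.

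Split this sum according to whether $\deg(v)>d$. The $k$ high-degree vertices contribute at most $k\binom{N-1}{\ell-1}$. The remaining $N-k$ vertices contribute at most $(N-k)d$. Hence
\[
\ell|E|\le k\binom{N-1}{\ell-1}+(N-k)d = k\left(\binom{N-1}{\ell-1}-d\right)+Nd .
\]

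If $d\ge\binom{N-1}{\ell-1}$, no vertex can have degree exceeding $d$. In that case $\ell|E|\le Nd$, so the claimed lower bound is nonpositive and holds trivially; read as a lower bound on $k\ge 0$, the formula gives nothing to prove. So we may assume $d<\binom{N-1}{\ell-1}$. Then the coefficient of $k$ is positive, and rearranging gives
\[
k\ge\frac{\ell|E|-dN}{\binom{N-1}{\ell-1}-d}.
\]

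For the second (strict) inequality we compare denominators. Assume $\ell|E|-dN>0$, since otherwise the right-hand side is nonpositive and the statement is trivial. If $d>0$, then shrinking the denominator from $\binom{N-1}{\ell-1}$ to $\binom{N-1}{\ell-1}-d$ strictly increases the fraction. If $d=0$, the two expressions coincide and the inequality is only non-strict; this edge case should be noted, or the claim read as $\ge$.

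There is no genuine obstacle here. The only care needed is the sign condition on $\binom{N-1}{\ell-1}-d$ and the degenerate cases $d=0$ or a nonpositive numerator, where the inequalities become trivial or non-strict.
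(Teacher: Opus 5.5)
Your main argument is correct and is the paper's proof: bound $\ell|E|=\sum_v\deg(v)$ by $k\binom{N-1}{\ell-1}+(N-k)d$, then divide by $\binom{N-1}{\ell-1}-d$. The paper tacitly assumes this divisor is positive.

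Your handling of the degenerate cases needs correcting. Suppose $d>\binom{N-1}{\ell-1}$. Then $\ell|E|\le N\binom{N-1}{\ell-1}<Nd$, so the numerator is negative, \emph{and} so is the denominator. The first fraction is therefore strictly positive while no vertex has degree above $d$, so the first bound is false, not trivial. At $d=\binom{N-1}{\ell-1}$ it is undefined. The fix is to add the hypothesis $0\le d<\binom{N-1}{\ell-1}$. This is harmless for the paper, which applies the proposition with $d=(\ell-1)|E|/|V|\le\frac{\ell-1}{\ell}\binom{|V|-1}{\ell-1}$ for nonempty $E$. In the excluded range, only the weaker bound with denominator $\binom{N-1}{\ell-1}$ is trivially true.

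Similarly, when $\ell|E|-dN\le 0$ the strict inequality between the two fractions does not become trivial; it fails. It is an equality if the numerator is $0$, and reversed if the numerator is negative and $d>0$. So the strict comparison needs $d>0$ and $\ell|E|>dN$. Both hold in the paper's application, where the numerator equals $|E|$.
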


\begin{proof}
    Let S be the set of vertices with degree larger than d.
    
    The vertices in $S$ have degree at most $\binom{|V|-1}{\ell-1}$. The remaining $|V| - |S|$ vertices have degree at most $d$. Since the total number of edges is $1/\ell$ times the sum of degrees, the total number of edges $|E|$ is less than or equal to
    \[ \frac{1}{\ell} \left( |S| \binom{|V|-1}{\ell-1} + (|V| - |S|) d \right). \]
    Rearranging the terms we see that
    \[
        \left( |S| \left[ \binom{|V|-1}{\ell-1} - d \right] \right) \geq \ell |E| - d |V|
    \]
    which gives us the desired inequality.
\end{proof}

We can now find an upper bound on the maximum $q$-degree in any $\rho n$-induced subhypergraph of $H[C_t]$.

\begin{proposition}[Maximum Degree Bound]
\label{prop2}
    Let $H = (V, E)$ be a $q$-uniform hypergraph with $|V| = n$, let $I$ be an independent set in $H$, and let $t \le \frac{|I|}{q - 1}$. Then the maximum $q$-degree of every $(\le \rho n)$-subhypergraph of $H[C_t]$ is at most \[ \frac{2 (q-1)}{\rho t} \binom{\rho n - 1}{q - 1}. \]
\end{proposition}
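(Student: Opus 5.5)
We prove the bound by a counting argument that tracks how many vertices are deleted from the container in every outer iteration. Let $d_s$ denote the maximum $q$-degree of a vertex of $I \cap C_{s-1}$ over all $(\le\rho n)$-subhypergraphs of $C_{s-1}$, i.e. $d_s=\deg_{\mathcal{D}_{s,q}}(v_s)$. Since $C_t\subseteq C_{t-1}\subseteq\dots$, every $(\le\rho n)$-subhypergraph of $C_t$ is also one of $C_{s-1}$ for every $s\le t$. In particular the sequence $d_s$ is non-increasing.

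The first step is to reduce the maximum over all vertices of $C_t$ to $d_t$. Any vertex $w\in C_t$ whose $(\le\rho n)$-degree in $C_{t-1}$ exceeds $d_t$ was placed in $X_q$ and removed. Hence every vertex of $C_t$ has degree at most $d_t$ in every $(\le\rho n)$-subhypergraph of $C_{t-1}$, and so also of $C_t$. It therefore suffices to show $d_t\le \frac{2(q-1)}{\rho t}\binom{\rho n-1}{q-1}$.

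The second step is a charging argument over the inner loop of iteration $s$. Write $D=\mathcal{D}_{s,q}$; it has at most $\rho n$ vertices, and $v_s$ has $d_s$ edges in it. Consider how these $d_s$ edges through $v_s$ propagate down the inner loop. At level $\ell$, the fingerprint vertex $v_{s,\ell}$ is the vertex of $I$ of maximum degree in $\mathcal{D}_{s,\ell+1}$, and every vertex of larger degree goes into $X_\ell$. We would like to show that if $v_s$'s edges are numerous, then the vertices removed at iteration $s$ (namely $X_\ell$, the $1$-edges of $\mathcal{D}_{s,1}$, and the new fingerprint vertices) number roughly at least
\[\frac{\rho n\, d_s}{(q-1)\binom{\rho n-1}{q-1}}\cdot\frac{1}{2}.\]
Heuristically, an $\ell$-uniform hypergraph on $m\le \rho n$ vertices with $E$ edges has a vertex of degree $\ge \ell E/m$. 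Either $v_{s,\ell}$ has large degree, so the next hypergraph keeps many edges, or many vertices exceed $v_{s,\ell}$'s degree and fall into $X_\ell$. Proposition~\ref{prop1} quantifies the second alternative. Chaining over $\ell=q-1,\dots,1$, at the end either many $1$-edges remain, each removing a vertex, or at some level Proposition~\ref{prop1} with threshold $d$ equal to half the average degree yields at least $\ell|E|/(2\binom{m-1}{\ell-1})$ removed vertices. Normalising by the ratios $\binom{\rho n-1}{\ell-1}$ gives a removal count of order $\rho n\, d_s/\bigl(2(q-1)\binom{\rho n-1}{q-1}\bigr)$, up to a constant we accept.

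The final step sums over iterations. We only need the vertices of the $\rho n$-set relevant to the container, but we can instead use the total: the removed sets over $s=1,\dots,t$ are disjoint subsets of $V$. To get the $\rho$ factor correctly, we would sum only removals inside $\rho n$-sized windows, or use $|I|\le\rho n$ implicitly. Using monotonicity $d_s\ge d_t$, we get
\[t\cdot\frac{\rho n\, d_t}{2(q-1)\binom{\rho n-1}{q-1}}\le \rho^{?} n.\]
The exponent of $\rho$ must come out to give $d_t\le\frac{2(q-1)}{\rho t}\binom{\rho n-1}{q-1}$, which suggests the per-step removal bound should be $n\,d_s\rho/\bigl(2(q-1)\binom{\rho n-1}{q-1}\bigr)$ summed against a budget of $n$.

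The main obstacle is the second step: making the chaining through the $q-1$ inner levels lose only the factor $2(q-1)$ rather than something exponential in $q$. Our first attempt would be the dichotomy at each level, with threshold half the current average degree. Since the threshold only halves once before a large set is removed, the cost should be a single factor of $2$, with the $(q-1)$ coming from the number of levels in a union (pigeonhole) over which level contributes.
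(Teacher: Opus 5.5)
Your outer structure is sound and is essentially the paper's. Reducing to $d_t$ via $X_q$, summing the disjoint sets $C_{s-1}\setminus C_s$ against the budget $n$, and using that $d_s$ is non-increasing is equivalent to the paper's pigeonhole choice of some $j\le t$ with $|C_{j-1}\setminus C_j|\le n/t$, followed by $C_t\subseteq C_j$. Your worry about the power of $\rho$ is unfounded: the per-iteration bound $\rho n\, d_s/\bigl(2(q-1)\binom{\rho n-1}{q-1}\bigr)$ you first wrote, summed against $n$, gives exactly the stated bound.

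The genuine gap is that this per-iteration removal bound, which is the entire content of the proposition, is never proved. Moreover, the mechanism you suggest does not deliver it. With threshold half the average degree, the branch where $v_{s,\ell}$ beats the threshold only gives $|E[\mathcal{D}_{s,\ell}]|\ge \frac12(\ell+1)|E[\mathcal{D}_{s,\ell+1}]|/|V[\mathcal{D}_{s,\ell+1}]|$. Nothing stops this branch from occurring at every level before Proposition~\ref{prop1} fires or the $1$-edges are reached. The factors $\frac12$ compound, and carrying the computation through gives $2^{q-1}$ in place of $2(q-1)$; the two agree only for $q\le 3$. So the claim that the threshold ``only halves once'' is false, and the factor $q-1$ does not come from a union over levels.

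The missing idea is a level-dependent threshold. Apply Proposition~\ref{prop1} to the $\ell$-uniform $\mathcal{D}_{s,\ell}$ with $d=(\ell-1)|E|/|V|$, where $|E|,|V|$ are its edge and vertex counts and $|V|\le M_\ell:=\rho n-(q-\ell)$. There are two cases:
\begin{itemize}
\item If $\deg_{\mathcal{D}_{s,\ell}}(v_{s,\ell-1})\le d$, then more than $|E|/\binom{|V|-1}{\ell-1}$ vertices fall into $X_{\ell-1}$.
\item Otherwise $|E[\mathcal{D}_{s,\ell-1}]|>(\ell-1)|E|/|V|$.
\end{itemize}
Set $y_\ell=|E[\mathcal{D}_{s,\ell}]|/\bigl(\ell\binom{M_\ell}{\ell}\bigr)$. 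The first case gives $|C_{s-1}\setminus C_s|>M_\ell y_\ell$. The second gives $y_{\ell-1}\ge y_\ell$, using $\ell\binom{M}{\ell}=M\binom{M-1}{\ell-1}$ and $M_{\ell-1}=M_\ell-1$. At the bottom the $1$-edges are themselves removed, so $|C_{s-1}\setminus C_s|\ge M_1y_1$.

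Chaining gives $|C_{s-1}\setminus C_s|\ge M_1y_{q-1}=(\rho n-q+1)\,d_s/\bigl((q-1)\binom{\rho n-1}{q-1}\bigr)$. The per-level ratios $(\ell-1)/\ell$ telescope to $1/(q-1)$ whichever level fires, and the factor $2$ comes only from $\rho n-q+1\ge\rho n/2$. This is exactly the paper's downward induction $|E[\mathcal{D}_{j,\ell}]|\le\frac{2\ell}{\rho t}\binom{\rho n-(q-\ell)}{\ell}$ in the iteration $j$ with few removals.
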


\begin{proof}
    Fix any $t \le \frac{|I|}{q - 1}$. Since the containers satisfy
    $ V = C_0 \supseteq C_1 \supseteq \cdots \supseteq C_t, $
    we have
    \begin{align*}
        n \geq |C_0| - |C_t|
           = \sum_{i=1}^t \bigl( |C_{i-1}| - |C_i| \bigr)
           = \sum_{i=1}^t |C_{i-1} \setminus C_i|.
    \end{align*}
    
    Hence, there exists an index $j \in [t]$ such that $|C_{j-1} \setminus C_j| \le \frac{n}{t}$.
    
    \medskip
    \begin{claim}
        For each step of the inner loop, $1 \le \ell \le q-1$ in the $j^{th}$ iteration, 
        the number of $\ell$-edges in $\mathcal{D}_{j,\ell}$ is at most
        \begin{equation}\label{eqn1} 
            \frac{2\ell}{\rho t} \binom{\rho n - (q - \ell)}{\ell}
        \end{equation}
    \end{claim}
    \medskip
    
    \noindent\textit{Proof of Claim:} We argue by induction on $\ell$.
    
    \smallskip
    \emph{Base case: ($\ell_0$ = smallest value of $\ell$ such that $E[\mathcal{D}_{j,\ell+1}] > 0$} ) \\
    We observe that $\ell_0 = 1$ or $E[\mathcal{D}_{j,\ell_0}] = 0$ in this case.\footnote{We require $E[\mathcal{D}_{j,\ell_0}] = 0$ as a base case to handle the case that a vertex outside of $\mathcal{D}_{j,\ell+1}$ is added to the fingerprint, because when that happens $|V[\mathcal{D}_{j,\ell}]|$ does not shrink by one, and so the proof in the inductive step does apply as is. As such a vertex is selected then $E[\mathcal{D}_{j,\ell-1}]$ becomes $0.$}
    \begin{enumerate}[label=\roman*]
        \item  \textit{$\ell_0 = 1$:} The vertices corresponding to $1$-edges in $\mathcal{D}_{j,1}$ are removed when forming $C_j$, so their number is at most $|C_{j-1} \setminus C_j| \le \frac{n}{t}$. Since $q = o(\rho n)$ \footnote{We observe that the sample complexity bound of \Cref{thm1} holds trivially when $q = \Omega(\rho n)$. Notice that the number of vertices is $O\left(\frac{q}{\rho}\right)$ in this case so the sample complexity is $O\left(\frac{q}{\rho}\right) < O\left(\frac{1}{\varepsilon}\right)$. We remind the reader that the problem is non-trivial only when $\varepsilon < \frac{\rho^q}{q!}$. We can therefore assume $q = o(\rho n)$ for the remainder of this paper.}, we have \[\frac{n}{t} \le \frac{2}{\rho t} (\rho n - q + 1),\] so the claim holds for $\ell_0 = 1$.

        \item \textit{$E[\mathcal{D}_{j,\ell_0}] = 0$:} The claim trivially holds since $E[\mathcal{D}_{j,\ell_0}] = 0 < \frac{2\ell}{\rho t} \binom{\rho n - (q - \ell)}{\ell}$ for any value of $\ell.$
    \end{enumerate}
    \smallskip
    \emph{Inductive step:}  Assume \Cref{eqn1} holds for $\ell-1$ where $1\le \ell\le q-2$. Suppose, for contradiction,
    \[ |E[\mathcal{D}_{j,\ell}]| > \frac{2\ell}{\rho t}\binom{\rho n-(q-\ell)}{\ell} \]
    By induction, we have $|E[\mathcal{D}_{j,\ell-1}]| \leq \frac{2 (\ell-1)}{\rho t}\binom{\rho n - \left(q- (\ell - 1)\right)}{\ell - 1}$, \, hence \,
    \[ \deg_{\mathcal{D}_{j,\ell-1}}(v_{j,\ell-1}) \leq \frac{2(\ell-1)}{\rho t}\binom{\rho n-(q-\ell)-1}{\ell-1}. \]
    
    Therefore, by the standard averaging bound (\Cref{prop1}) with $d = (\ell - 1) \frac{|E|}{|V|}$, 
    \begin{align*}
        \frac{|E[\mathcal{D}_{j,\ell}]|}{\binom{|V[\mathcal{D}_{j,\ell}]|-1}{\ell-1}}
            > \frac{\frac{2\ell}{\rho t}\binom{\rho n-(q-\ell)}{\ell}}{\binom{\rho n - (q-\ell) -1}{\ell-1}}
            = \frac{2}{\rho t}\bigl(\rho n-(q-\ell)\bigr)
            > \frac{n}{t}
    \end{align*}
    where we have used the fact that $|V[\mathcal{D}_{j,\ell}]| \leq \rho n - (q-\ell)$.
    
    \medskip
    Hence there are more than $n/t$ vertices in some $\rho n$-induced subhypergraph of $\mathcal{D}_{j,\ell}$ with degree exceeding
    \begin{align*}      
        (\ell-1)\,\frac{|E[\mathcal{D}_{j,\ell}]|}{|V[\mathcal{D}_{j,\ell}]|}
            > (\ell - 1) \cdot \frac{\frac{2\ell}{\rho t} \binom{\rho n-(q-\ell)}{\ell}}{\rho n - (q-\ell)}
            = \frac{2(\ell-1)}{\rho t}\binom{\rho n-(q-\ell)-1}{\ell-1}
    \end{align*}
    
    All such vertices are removed from $C_j$ (since they have higher degree than $v_{j,\ell}$), implying \\
    $|C_{j-1} \setminus C_j|>n/t$, a contradiction. This completes the induction and the proof of the claim.
    
    \medskip
    \emph{Why the claim implies the proposition:}
    We consider the $j^{th}$ iteration of the algorithm. Taking $\ell = q-1$, the number of $(q-1)$-edges in $\mathcal{D}_{j,q-1}$ is at most
    $\frac{2\dot(q-1)}{\rho t} \binom{\rho n - 1}{q - 1}$. Therefore, the fingerprint vertex $v_{j,q-1}$ or $v_j$ has $q$-degree at most $\frac{2\dot(q-1)}{\rho t} \binom{\rho n - 1}{q - 1}$ in any $(\le \rho n)$-induced subgraph of $H[C_{j-1}]$. All vertices exceeding this degree in some $(\le \rho n)$-subgraph of $H[C_{j-1}]$ are excluded from $C_j$, so the maximum $q$-degree in any $(\le \rho n)$-subgraph of $H[C_j]$ is at most $\frac{2 (q-1)}{\rho t} \binom{\rho n - 1}{q - 1}$. Since $C_t \subseteq C_j$, the same bound holds for every $(\le \rho n)$-subgraph of $H[C_t]$.
\end{proof}

\subsection{Proof of the New Container Lemma}

We can now prove the new container lemma which essentially states that the container size is "small" (as defined in the lemma) for some small value of $t$ independent of the hypergraph size $n$ when the input hypergraph is $\varepsilon$-far from $\rho n$ independent set. We prove the contrapositive, i.e., we construct a $\rho n$-induced subhypergraph which is not $\varepsilon$-far from $\rho$-independent set property if the container size is "large" for all values of $t$.

\begin{lemma}[New Container Lemma]
    \label{lem1}
    Let $H = (V, E)$ be a $q$-uniform hypergraph on $n$ vertices that is $\varepsilon$-far from $\rho$-independent set property. Then, for any independent set $I$ in $H$, there exists an index $t \le \frac{8 \rho^{q-1}}{\varepsilon (q-1)!}$ such that the size of the $t^{th}$ container of $I$ satisfies
    \[ |C_t| \le \left( \rho - t \cdot \frac{\varepsilon (q-2)!}{4 \rho^{q-2} \ln(\rho^{q-1} / \left(\varepsilon (q-1)!\right))} \right) n \]
\end{lemma}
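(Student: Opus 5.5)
We argue by contradiction, following the scheme sketched in the overview. Fix $I$ and suppose that for every $t \le T := \frac{8\rho^{q-1}}{\varepsilon (q-1)!}$ the container is large:
\[ |C_t| > \Big(\rho - t\,\delta\Big) n, \qquad \delta := \frac{\varepsilon (q-2)!}{4\rho^{q-2}\ln\!\big(\rho^{q-1}/(\varepsilon(q-1)!)\big)}. \]
From this we will build a set $S\subseteq V$ with $|S|=\rho n$ whose induced subhypergraph has fewer than $\varepsilon n^q$ edges. Deleting those edges makes $S$ independent, which contradicts $\varepsilon$-farness. The main tool is \Cref{prop2}: every $(\le\rho n)$-subhypergraph of $H[C_t]$ has maximum $q$-degree at most $\frac{2(q-1)}{\rho t}\binom{\rho n-1}{q-1}$.

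\textbf{Construction.} Build $S$ by peeling layers in reverse order. Put $t_k = 2^k$ for $k=0,1,\dots,K$, where $K \approx \log_2 T$, so that $t_K\approx T$. The innermost layer $L_K\subseteq C_{t_K}$ has size about $(\rho - t_K\delta)n$; since $t_K\delta \ll \rho$ by the choice of $T$ and $\delta$ (this needs checking), it is already most of $\rho n$. Layer $L_k$ is taken from $C_{t_k}\setminus C_{t_{k+1}}$. The containers are nested and the sizes $|C_{t_k}|$ are bounded below, so we can choose disjoint $L_k$ with $|L_k| \le (t_{k+1}-t_k)\delta n = t_k\delta n$ and total size exactly $\rho n$. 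For $k=0$, $C_{t_0}$ with $t_0=1$ (or $C_0=V$) supplies whatever is still missing.

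\textbf{Edge count.} Charge each edge of $H[S]$ to its vertex lying in the outermost layer, i.e.\ the one with smallest index $k$. All vertices of that edge lie in $C_{t_k}$ and $|S|\le\rho n$, so \Cref{prop2} bounds the charge per vertex of $L_k$ by $\frac{2(q-1)}{\rho t_k}\binom{\rho n-1}{q-1}$. Summing over layers,
\[ |E(H[S])| \le \sum_k |L_k|\cdot \frac{2(q-1)}{\rho t_k}\binom{\rho n-1}{q-1} \le \sum_{k\le K} t_k\delta n\cdot\frac{2(q-1)}{\rho t_k}\cdot\frac{(\rho n)^{q-1}}{(q-1)!} = (K+1)\,\frac{2\delta\rho^{q-2}}{(q-2)!}\,n^q. \]
The innermost layer $L_K$ needs separate treatment. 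It has size up to $\rho n$, but its vertices lie in $C_{t_K}$, so their degree is at most $\frac{2(q-1)}{\rho T}\binom{\rho n-1}{q-1}$. This gives a contribution of about $\frac{2\rho^{q-1}}{T(q-2)!}n^q = \frac{\varepsilon(q-1)}{4(q-2)!\,(q-1)!/(q-2)!}\cdots$, which is at most $\varepsilon n^q/4$ after matching constants (the factor $8$ in $T$ is chosen exactly for this).

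\textbf{Choice of constants.} With $K+1\approx \ln(\rho^{q-1}/(\varepsilon(q-1)!))$ up to a constant, the definition of $\delta$ makes the dyadic sum roughly $\varepsilon n^q/2$. Together with the $L_K$ contribution, the total is below $\varepsilon n^q$, which gives the contradiction. So some $t\le T$ satisfies the stated bound.

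\textbf{Main obstacle.} The delicate part is the bookkeeping of layer sizes against the hypothesis. Each $L_k$ must have size of order $t_k\delta n$, so that the $1/t_k$ in \Cref{prop2} cancels and each dyadic scale contributes the same amount. The resulting $\log$ factor must then match the $\ln(\cdot)$ in $\delta$ with the right base and constants, which may require $\log_2$ versus $\ln$ adjustments. If the plain dyadic scheme fails, the fallback is a continuous version: take $S$ to be $\rho n$ vertices ordered by the time they leave the container. A vertex leaving at time around $t$ has degree $O(1/t)$ within $C_t$, and there are about $\delta n$ vertices per unit of $t$. Summing $\delta n\sum_{t\le T} \frac{1}{t}$ gives the harmonic logarithm directly.
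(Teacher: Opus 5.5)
Your overall plan matches the paper's: argue by contraposition, grow $S$ outward from the innermost container, charge each edge of $H[S]$ to its outermost vertex, and bound the charges with \Cref{prop2}. The fallback in your last paragraph is exactly the paper's proof: one layer $V_t=S\cap(C_{t-1}\setminus C_t)$ per step, with weight $\frac{2(q-1)}{\rho(t-1)}\binom{\rho n-1}{q-1}$ and about $\delta n$ vertices per layer. The dyadic version you lead with, however, does not prove the lemma with its stated constants. It fails at the step ``$K+1\approx\ln(\cdot)$ up to a constant, so the dyadic sum is roughly $\varepsilon n^q/2$.'' Write $L=\rho^{q-1}/(\varepsilon(q-1)!)$. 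Each scale contributes $\frac{2\delta\rho^{q-2}}{(q-2)!}n^q=\frac{\varepsilon}{2\ln L}n^q$, and there are $K+1=\log_2(8L)+1=\frac{\ln L}{\ln 2}+4$ scales. So your own bound evaluates to $\bigl(\frac{1}{2\ln 2}+\frac{2}{\ln L}\bigr)\varepsilon n^q\approx(0.72+2/\ln L)\,\varepsilon n^q$. Adding the core term, this is larger than $\varepsilon n^q$ for $\ln L$ near $8$, which is the regime the paper works in ($\varepsilon<e^{-8}\rho^{q-1}/(q-1)!$). The budget is $\frac14+\frac12+O(1/\ln L)$, so a constant factor cannot be absorbed. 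Geometric scales of ratio $r$ lose a factor $(r-1)/\ln r$, which tends to $1$ only as $r\to 1$. That limit is the unit-step scheme, where $\sum_{t=2}^{T}\frac{1}{t-1}\le \ln T+1$ produces exactly the natural logarithm that $\delta$ is designed to cancel. The dyadic argument would still suffice for \Cref{thm1} if you enlarged the constant $4$ in $\delta$ (or assumed $\ln L$ much larger), but it does not give the lemma as stated.

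Three bookkeeping points also need repair in either version.

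First, you cannot impose $|L_k|\le t_k\delta n$ layer by layer. Because $S$ either lies inside or contains each container, the hypothesis gives only the cumulative bound $\sum_{j\le k}|L_j|\le \rho n-|C_{t_{k+1}}|<t_{k+1}\delta n$. You then need a rearrangement (Abel summation) step: the weights decrease as layers move inward, so the weighted sum is largest when every prefix bound is tight. This is the paper's step ``the sum is maximized when all $|V_t|=\delta n$''; it happens to reproduce your count of $K+1$ units.

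Second, your core estimate $\frac{2\rho^{q-1}}{T(q-2)!}n^q$ equals $\frac{(q-1)\varepsilon}{4}n^q$, not at most $\frac{\varepsilon}{4}n^q$, and the $8$ in $T$ does not fix this. Summing degrees over $L_K$ counts each internal edge $q$ times. Dividing by $q$ gives $\frac{2(q-1)}{\rho T}\binom{\rho n}{q}\le\frac{\varepsilon q!}{4\rho^q}\binom{\rho n}{q}<\frac{\varepsilon}{4}n^q$, as in the paper.

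Third, \Cref{prop2} gives nothing at $t=0$. The fewer than $\delta n$ vertices taken from $C_0\setminus C_1$ need the trivial degree bound $\binom{\rho n-1}{q-1}$, which is how the paper treats $V_1$.
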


\begin{proof}
    We prove the contrapositive. Let $t_1 = \frac{8 \rho^{q-1}}{\varepsilon (q-1)!}$. Suppose that for all $t \le t_1$, the size of the $t^{th}$ container $C_t(I)$ satisfies
    \begin{equation} \label{eqn2}
        |C_t(I)| > \left(  \rho - t \cdot \frac{\varepsilon (q-2)!}{4 \rho^{q-2} \ln(\rho^{q-1} / (\varepsilon (q-1)!))}  \right) n
    \end{equation}
    We will show that there is some $\rho n$ induced subhypergraph with $< \varepsilon n^q$ edges showing that H is not $\varepsilon$-far from $\rho$-independent set property.
    
    \medskip
    
    We start with the container $C_{t_1}(I)$ to construct the set of vertices S and the corresponding induced subhypergraph $H[S]$. Add vertices one at a time from the previous containers $C_t(I)$ (in decreasing order of t) until the size of the set reaches $\rho n$. Note that we can choose any set of $\rho n$ vertices from $C_{t_1}(I)$ if the size of the container is greater than $\rho n$. Let $V_t$ denote the set of vertices in container $(t-1)$ but not in container $t$ in this set $S$, i.e., $V_t = \{v : v \in S \text{ and } v\in C_{t-1} \backslash C_t\}$ and let $t_0$ be the smallest value of t for which $|V_t| > 0$. 
    
    \medskip
    
    We now prove the claim. Since $\left| S \right| \le \rho n$, \Cref{prop2} implies that the maximum q-degree in any $\rho n$ induced subhypergraph of $H\left[C_{t_1}(I)\right]$ is at most
    \[  \frac{\frac{2(q-1)}{\rho} \binom{\rho n-1}{q-1}}{\frac{8 \rho^{q-1}}{\varepsilon (q-1)!}}
            \leq \frac{\varepsilon q!}{4 \rho^q} \binom{\rho n-1}{q-1} \]
    Hence, any $(\le \rho n)$-induced subhypergraph of $H\left[C_{t_1}(I)\right]$ has at most $\frac{\varepsilon q!}{4 \rho^q} \binom{\rho n}{q}$ $q$-edges.

    When $|C_{t_1}| \geq \rho n$, then the total number of edges is at most 
    \[ \frac{\varepsilon q!}{4 \rho^q} \binom{\rho n}{q} < \frac{\varepsilon q!}{\rho^q} \frac{(\rho n)^q}{q!} < \varepsilon n^q,\]
    therefore $H[S]$ is not $\varepsilon$-far from $\rho n$ independent set.

    Now we consider the case when $|C_{t_1}| < \rho n$.
    By \Cref{prop2}, each $v \in V_t$ contributes at most
    \[ \frac{2(q-1)}{\rho (t-1)} \binom{\rho n-1}{q-1} \]
    additional edges (or $\binom{\rho n-1}{q-1}$ if $t=1$), since $v$ lies in $C_{t-1}(I)$.
    
    \medskip
    
    Therefore, the total number of edges is at most
    \begin{equation}
        \label{eqn3}
        \frac{\varepsilon q!}{4 \rho^q} \binom{\rho n}{q}
        + |V_1| \cdot \binom{\rho n-1}{q-1}
        + \sum_{t=2}^{t_1} |V_t| \cdot \frac{2(q-1)}{\rho (t-1)} \binom{\rho n-1}{q-1}
    \end{equation}
    
    For any $t_0 \le t \le t_1$, the set $\bigcup_{j=1}^t V_j$ corresponds to the vertices not appearing in $C_t(I)$, and so \Cref{eqn2} implies
    \[ \sum_{j=1}^t |V_j| \le \rho n - |C_t| <  \frac{t \varepsilon n (q-2)!}{4\rho^{q-2} \ln(\rho^{q-1} / (\varepsilon (q-1)!))} \]
    
    In the sum~\eqref{eqn3}, the contribution from $|V_t|$ decreases as $t$ increases, so the sum is
    maximized when all $|V_t| = \frac{\varepsilon n (q-2)!}{4 \rho^{q-2} \ln(\rho^{q-1} / \varepsilon (q-1)!)}$. Substituting into~\eqref{eqn3}, we obtain
    \begin{align*}
    \frac{\varepsilon q!}{4 \rho^q} \binom{\rho n}{q}
    &+ \frac{\varepsilon n (q-2)!}{4 \rho^{q-2} \ln(\rho^{q-1} / (\varepsilon (q-1)!))} \binom{\rho n - 1}{q - 1} \\
    &+ \frac{\varepsilon n (q-2)!}{4 \rho^{q-2} \ln(\rho^{q-1} / (\varepsilon (q-1)!))} 
    \sum_{t=2}^{\frac{8 \rho^{q-1}}{\varepsilon (q-1)!}} \frac{2(q-1)}{\rho (t-1)} \binom{\rho n - 1}{q - 1} \\
    < \frac{\varepsilon q!}{4 \rho^q} \binom{\rho n}{q}
    &+ \frac{\varepsilon q!}{4 (q-1)\rho^{q-1} \ln(\rho^{q-1} / (\varepsilon (q-1)!))} \binom{\rho n}{q} \\
    &+ \frac{2 \varepsilon q!}{4 \rho^{q} \ln(\rho^{q-1} / (\varepsilon (q-1)!))} 
    \sum_{t=2}^{\frac{8 \rho^{q-1}}{\varepsilon (q-1)!}} \frac{1}{(t-1)} \binom{\rho n}{q} \\
    < \frac{\varepsilon q!}{\rho^q} \binom{\rho n}{q}  &\left( \frac{1}{4}
    + \frac{\rho}{4  (q-1) \ln(\rho^{q-1} / (\varepsilon (q-1)!))}
    + \frac{1}{2 \ln(\rho^{q-1} / (\varepsilon (q-1)!))}
    \sum_{t=2}^{\frac{8 \rho^{q-1}}{\varepsilon (q-1)!}} \frac{1}{t-1}  \right) \\
    < \frac{\varepsilon q!}{\rho^q} \binom{\rho n}{q}& < \frac{\varepsilon}{\rho^q} (\rho n)^q < \varepsilon n^q,
    \end{align*}
    where the final inequality uses the upper bound $H(m) \le \ln(m) + 1$ on the harmonic series and
    the assumption that $\varepsilon$ is less than a sufficiently large constant, i.e., $\varepsilon < e^{-8} \frac{\rho^{q-1}}{(q-1)!}$ (which can be made without loss of generality as we can always choose a smaller value of $\varepsilon'$ for the tester such that $e^{-8} \varepsilon < \varepsilon' < \varepsilon$ which gives us the same sample complexity). Note that $\varepsilon < \frac{\rho^q}{q!} < \frac{\rho^{q-1}}{(q-1)!}$ since the graph is $\varepsilon$-far from $\rho$-independent set property.
    
    This completes the proof.
\end{proof}

\begin{remark}
\label{remark:containerLemmaTightness}
    There exists a $q$-uniform hypergraph $H$ that is $\varepsilon$-far from having a $\rho n$ independent set and an independent set $I$ for which \Cref{lem1} is nearly tight in the $\rho$-dependence. Consider the same example from the lower bound of Langberg \cite{Langberg04}, and re-analyzed in \Cref{appendix1}, which is a hypergraph $H$ formed by planting an independent set $I$ of size $\left(\rho-\frac{4 (q-1)! \varepsilon}{\rho^{q-1}}\right) n$ in a complete $q$-uniform hypergraph. See \Cref{appendix1} for a justification that this hypergraph is $\varepsilon$-far from having a $\rho n$ independent set.
       
    We claim that for all $t$ we have that $|C_t(I)| \geq \left(\rho-\frac{16 (q-1)! \varepsilon t}{\rho^{q-2}}\right) n,$ which demonstrates that \Cref{lem1} is roughly tight in the $\rho$ dependence. To show this, first consider the case that $t \leq \frac{1}{2\rho}.$ Observe that for the first $\frac{1}{2\rho}$ iterations of the outer loop of the container procedure, exactly $\rho n$ vertices are removed from the container in each iteration. This is because no vertices will be removed due to having larger degree in a $\rho n$ subhypergraph (every vertex has full degree in some $\rho n$ subhypergraph). Hence, $|C_t(I)| \geq \rho n$ for all $t \leq \frac{1}{2\rho}.$
       
    Next consider the case that $t > \frac{1}{2\rho}.$ Recall that at every step $t$ of the container procedure we have that $I \subseteq C_t(I) \cup F_t(I),$ and so $|C_t(I)| \geq |I \setminus F_t(I)| \geq \left(\rho-\frac{8 (q-1)! \varepsilon}{\rho^{q-1}}\right) n,$ where we are assuming that $|F_t(I)|$ is sufficiently small. Hence, for $t > \frac{1}{2\rho},$ we find that $|C_t(I)| \geq \left(\rho-\frac{16 (q-1)! \varepsilon t}{\rho^{q-2}}\right) n,$ completing the proof of the claim.
\end{remark}

\begin{remark}
\label{remark:oldApproachWeakness}
    It is not possible to achieve our bound in \Cref{lem1} and \Cref{thm1} using a standard hypergraph container procedure.
    In particular, a standard hypergraph container procedure run on a $q$-uniform hypergraph $H=(V,E)$ and independent set $I$ would initialize a container to $V$ and run in iterations.
    In each iteration it would run an inner loop that selects a vertex $v$ in the independent set with highest degree in the hypergraph on the current container, then look at the $(q-1)$-uniform hypergraph formed by all edges incident to $v,$ and so on until reaching the $1$-uniform hypergraph. 
    This is in stark contrast to the algorithm described in \Cref{sect:ContainerMethod} that selects vertices with highest degree in some $\rho n$ subhypergraph.
    We argue that this natural hypergraph procedure is not useful for proving our bounds via an example described below.
    
    Similar to the the lower bound of Langberg \cite{Langberg04} and re-analyzed in \Cref{appendix1}, start by planting a large independent set $I$ of size $\left(\rho-\frac{8 (q-1)! \varepsilon}{\rho^{q-1}}\right) n$ in an n-vertex hypergraph.
    Between $I$ and $V \setminus I$ add all possible edges with $q-1$ endpoints in $I$ and $1$ endpoint in $V \setminus I.$
    The hypergraph $H[V \setminus I]$ is going to be a random hypergraph with edge density $\rho^{q-2} \cdot (q-1)$ so that the vertices in $V \setminus I$ have degree less than roughly $\rho^{q-2} (q-1) \cdot \binom{n}{q} \cdot \frac{q}{n} \leq \frac{\rho^{q-2} n^{q-1}}{(q-2)!},$ which is roughly the degree of vertices in $I.$
    Finally, select a small subset $I' \subset I$ of size roughly $q/\rho^{q-2}$ and remove all edges in the hypergraph with at least one endpoint in $I'.$
    Instead, for each $v \in I'$ select $2 \frac{\rho^{q-2}}{(q-2)!}n^{q-1}$ random $(q-1)$ tuples from $V \setminus v$ and add an edge containing $v$ and those $q-1$ vertices. Crucially we want the vertices in $I'$ to have highest degree in the hypergraph, but whose degree is similar to the vertices in $I.$
    
    It is not hard to show that $H$ is $\varepsilon$-far from having a $\rho n$ independent set with high probability (over the randomness of the edges added in the hypergraph), assuming that $\varepsilon$ is sufficiently small compared to $\rho.$
    Further, we claim that running the standard hypergraph container procedure above would take at least roughly $\frac{1}{\rho^{q-2}}$ steps to get a desirable container.
    This is because the vertices in $I'$ have highest degree and so they are selected in each step of the procedure until they have all been selected.
    Since the edges of vertices in $I'$ are randomly placed throughout the hypergraph with density roughly $2 (q-1) \rho^{q-2}$, the standard hypergraph container procedure removes roughly $2(q-1) \rho^{q-2} n$ vertices in each step of the outer loop.
    Hence it will take at least roughly $1/\rho^{q-2}$ steps until $|C_t(I)| < \rho n.$
    
    Comparing with our hypergraph procedure given in \Cref{sect:ContainerMethod} on this hypergraph construction, since at every step $t$ of the container procedure we have that $I \subseteq C_t(I) \cup F_t(I),$ then $|C_t(I)| \gtrsim \left(\rho-\frac{8 (q-1)! \varepsilon}{\rho^{q-1}}\right) n.$
    Hence, the $t$ given in \Cref{lem1} is at most $\frac{32 (q-1) \ln(1/\varepsilon)}{\rho},$ and so the number of steps required by a standard hypergraph is larger than the number of steps given in \Cref{lem1} by a factor of at least $1/\rho^{\Theta(q)}.$
\end{remark}

\section{Upper Bound on Sample Complexity}
\label{sect:thmProof}

The proof of the theorem uses the following form of Chernoff’s bound for hypergeometric distributions \cite{blais2025testing}.

\begin{lemma}[Chernoff's bound]
    \label{lem2}
    Let $X$ be drawn from the hypergeometric distribution $H(N, K, m)$, where $m$ elements are drawn without replacement from a population of $N$ elements, $K$ of which are marked. Let $X$ denote the number of marked elements drawn. Then for any $\vartheta \ge \mathbb{E}[X]$,
    \[  \Pr\bigl[X \ge \vartheta\bigr] \le  \exp\!\left(-\frac{(\vartheta - \mathbb{E}[X])^2}{\vartheta + \mathbb{E}[X]}\right). \] 
\end{lemma}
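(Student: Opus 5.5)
The plan is to reduce to the binomial case and then apply the usual multiplicative Chernoff bound, rewriting its exponent in terms of $\vartheta$ and $\mathbb{E}[X]$. Write $\mu = \mathbb{E}[X] = mK/N$. If $\mu = 0$ then $K=0$ or $m=0$, so $X \equiv 0$. For $\vartheta > 0$ the probability is $0$; for $\vartheta = 0$ the right-hand side is $e^{0}=1$. So I will assume $\mu > 0$, and also $\vartheta > \mu$, since for $\vartheta = \mu$ the bound is again trivially $1$.

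\textbf{Step 1 (domination by the binomial).} Let $Y \sim \mathrm{Bin}(m, K/N)$ be the count of marked elements when sampling \emph{with} replacement, so that $\mathbb{E}[Y]=\mu$. I will invoke Hoeffding's classical comparison theorem: for every continuous convex $f$, $\mathbb{E}[f(X)] \le \mathbb{E}[f(Y)]$. Taking $f(x) = e^{\lambda x}$ with $\lambda > 0$ gives
\[
\mathbb{E}[e^{\lambda X}] \le \mathbb{E}[e^{\lambda Y}] = \bigl(1 + \tfrac{K}{N}(e^\lambda - 1)\bigr)^m \le \exp\bigl(\mu (e^\lambda - 1)\bigr).
\]
This is the only non-routine input, and I expect it to be the main obstacle if it has to be proved from scratch. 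In that case I would use Hoeffding's argument: write the sampling without replacement as a conditional expectation (an average over sub-multisets) of sampling with replacement, and conclude with Jensen's inequality. Alternatively, the indicators of drawing a marked element at each step are negatively associated, which gives the same MGF bound directly.

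\textbf{Step 2 (Markov and optimization).} By Markov's inequality applied to $e^{\lambda X}$, for every $\lambda>0$,
\[
\Pr[X \ge \vartheta] \le \exp\bigl(\mu(e^\lambda - 1) - \lambda\vartheta\bigr).
\]
Set $\vartheta = (1+\delta)\mu$ with $\delta > 0$ and choose $\lambda = \ln(1+\delta)$. This yields the standard bound
\[
\Pr[X \ge \vartheta] \le \exp\bigl(-\mu[(1+\delta)\ln(1+\delta) - \delta]\bigr).
\]

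\textbf{Step 3 (simplifying the exponent).} I will show $(1+\delta)\ln(1+\delta) - \delta \ge \frac{\delta^2}{2+\delta}$ for all $\delta \ge 0$. To prove it, let $g(\delta)$ be the left-hand side minus the right-hand side. Then $g(0)=0$, and the derivative condition reduces to $\ln(1+\delta) \ge \frac{2\delta}{2+\delta}$. That inequality holds because both sides vanish at $0$ and the derivatives compare as $\frac{1}{1+\delta} \ge \frac{4}{(2+\delta)^2}$, which is equivalent to $\delta^2 \ge 0$. Finally, substitute $\delta = (\vartheta - \mu)/\mu$:
\[
\mu \cdot \frac{\delta^2}{2+\delta} = \frac{(\vartheta-\mu)^2/\mu}{(\vartheta+\mu)/\mu} = \frac{(\vartheta - \mu)^2}{\vartheta + \mu}.
\]
This gives exactly the bound claimed in \Cref{lem2}.
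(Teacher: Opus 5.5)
Your proof is correct, and there is nothing in the paper to compare it against: the paper states this lemma without proof, importing it from Blais and Seth. Your three steps give the standard self-contained derivation. First, Hoeffding's convex-order comparison (or negative association of the draw indicators) bounds $\mathbb{E}[e^{\lambda X}]$ by the binomial moment generating function, hence by $\exp(\mu(e^{\lambda}-1))$. Second, Markov's inequality with $\lambda=\ln(1+\delta)$ yields the multiplicative Chernoff bound. Third, the estimate $(1+\delta)\ln(1+\delta)-\delta\ge\delta^2/(2+\delta)$ turns the exponent into $(\vartheta-\mu)^2/(\vartheta+\mu)$.

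Compared with the citation, your route makes clear that the comparison theorem is the only non-elementary input. It also gives a little more for free. Since $\exp(\mu(e^\lambda-1))$ is increasing in $\mu$, the same argument gives the bound with $\mathbb{E}[X]$ replaced by any upper bound $\bar\mu\in[\mathbb{E}[X],\vartheta]$, $\bar\mu>0$.

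Two minor points.

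\emph{Step 3.} Here $g'(\delta)=\ln(1+\delta)-\delta(4+\delta)/(2+\delta)^2$, so $\ln(1+\delta)\ge 2\delta/(2+\delta)$ is only a sufficient condition for $g'\ge 0$, not an equivalent one. It needs the extra line $2\delta/(2+\delta)-\delta(4+\delta)/(2+\delta)^2=\delta^2/(2+\delta)^2\ge 0$. A cleaner route avoids $g$ altogether. Since $\delta+\delta^2/(2+\delta)=2\delta(1+\delta)/(2+\delta)$, the target inequality becomes exactly $\ln(1+\delta)\ge 2\delta/(2+\delta)$ after dividing by $1+\delta>0$. You then verify that by your derivative comparison.

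\emph{Degenerate case.} When $\vartheta=\mathbb{E}[X]=0$, the exponent is $0/0$ rather than $0$. The inequality holds under the natural convention that the exponent is $0$, since the probability is then $1$. This is a defect of the statement's formula, not of your argument.
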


\upperBound*

\begin{proof}
    Let $S$ be a random set of $s = c \cdot \frac{q\rho^{2q-3}}{\varepsilon^2 (q-2)!^2} \ln^3\left(\frac{\rho^{q-1}}{\varepsilon (q-1)!}\right)$ vertices drawn uniformly at random from $V$ without replacement, where $c$ is a large enough constant. We ignore the integer rounding issues in the rest of the proof as they do not affect the asymptotics of the final result. The $\varepsilon$-tester checks if the induced subhypergraph $H[S]$ contains an independent size of $\rho s$, if yes, the tester accepts and rejects otherwise.

    If $G$ contains a $\rho n$ independent set, then $S$ contains at least $\rho s$ vertices from this independent set with probability at least $1/2$, since the number of such vertices follows a hypergeometric distribution, and the median of this distribution is at least $\rho s$.
    
    In the remainder of the proof we upper bound the probability that $S$ contains a $\rho s$-independent set when $G$ is $\varepsilon$-far from containing a $\rho n$-independent set. For the rest of the argument, let us call a container $C_t$ small when its cardinality is bounded by $|C_t| \le \left( \rho - t \cdot \frac{\varepsilon (q-2)!}{4 \rho^{q-2} \ln(\rho^{q-1}/(\varepsilon (q-1)!))} \right)n$, the expression in the lemma above.
    
    Let us denote the vertices of the sampled set $S$ as $u_1, u_2, \ldots, u_s$. First observe that, by \Cref{lem1}, for every independent set $I$ of size $\rho s$ in $S$, there is a $t \le \frac{8 \rho^{q-1}}{\varepsilon (q-1)!}$ and a fingerprint $F_t \subseteq I$ of size $(q-1) t$ that defines a small container $C_t(F_t)$ such that $I \subseteq C_t(F_t) \cup F_t $. Hence, the probability that $S$ contains an independent set of size at least $\rho s$ is at most the probability that there exists some $t \le \frac{8 \rho^{q-1}}{\varepsilon (q-1)!}$ such that $S$ contains a set of $t' = (q-1)t$ vertices that form the fingerprint $F_t$ of a small container $C_t$, and $S$ contains at least $\rho s - t'$ other vertices from $C_t$. We now upper bound this probability.
    
    For any $t'$ distinct indices $i_1, i_2, \ldots, i_{t'} \in [s]$, consider the event where the vertices $u_{i_1}, \ldots, u_{i_{t'}}$ form the fingerprint $F_t$ of a small container $C_t$. Let $X$ denote the number of vertices among the other $s - t'$ sampled vertices that belong to $C_t$. By the bound on the size of small containers, the expected value of $X$ is
    \begin{align*}
        E[X] &\le \left( \rho - t \cdot \frac{\varepsilon (q-2)!}{4 \rho^{q-2} \ln(\rho^{q-1}/(\varepsilon (q-1)!))} \right) (s - t') \\
        &< \rho s - \frac{t \varepsilon s (q-2)!}{4  \rho^{q-2} \ln(\rho^{q-1}/(\varepsilon (q-1)!))} \\
        &\le \rho s - t' - \frac{t \varepsilon s (q-2)!}{8 \rho^{q-2} \ln(\rho^{q-1}/(\varepsilon (q-1)!))},
    \end{align*}
    where the last inequality uses the fact that $s = c \cdot \frac{q\rho^{2q-3}}{\varepsilon^2 (q-2)!^2} \ln^3\left(\frac{\rho^{q-1}}{\varepsilon (q-1)!}\right)$ for a large enough constant $c$ and that the problem is non-trivial only when $\varepsilon < \frac{\rho^q}{q!}$.
    
    By the Chernoff bound (\Cref{lem2}), the probability that we draw at least $\rho s - t'$ vertices from $C_t$ in the final $s - t'$ vertices drawn to form $S$ is
    \begin{align*}
        \Pr[X \ge \rho s - t'] &\le \exp \left( -\frac{(\rho s - t' - E[X])^2}{\rho s - t' + E[X]} \right)  \\
        &\le \exp \left( - \frac{1}{2 \rho s} \frac{t^2 \varepsilon^2 s^2 (q-2)!^2}{64 \rho^{2q-4} \ln^2(\rho^{q-1}/(\varepsilon (q-1)!))} \right) \\
        &\le \exp \left( -\frac{t^2 \varepsilon^2 (q-2)!^2 s}{128 \rho^{2q-3} \ln^2(\rho^{q-1}/(\varepsilon (q-1)!))} \right).
    \end{align*} 
    Therefore, by applying a union bound over all possible choices of $t'$ indices from $[s]$, the probability that any set of at most $\frac{8 \rho^{q-1}}{\varepsilon (q-2)!}$ vertices in $S$ form the fingerprint of a small container from which we sample at least $\rho s - t'$ vertices in $S$ is at most
    \begin{align*}
        & \sum_{t'} \binom{s}{t'} \exp \left( -\frac{t'^2 \varepsilon^2 s}{128 (q-1)^2 \rho^{2q-3} \ln^2(\rho^{q-1}/(\varepsilon (q-1)!))} \right) \\
        \le & ~\sum_{t} \exp \left( (q-1)t \ln s - \frac{t^2 \varepsilon^2 (q-2)!^2 s}{128 \rho^{2q-3} \ln^2(\rho^{q-1}/(\varepsilon (q-1)!))} \right) \\
        < &~\frac{1}{3}
    \end{align*}
    where the first inequality uses the upper bound $\binom{s}{t'} \le s^{t'}$, and the last inequality again uses the fact that $s = c \cdot \frac{q\rho^{2q-3}}{\varepsilon^2 (q-2)!^2} \ln^3\left(\frac{\rho^{q-1}}{\varepsilon (q-1)!}\right)$ for a large enough constant $c$. Hence, the probability that $S$ contains an independent set of size at least $\rho s$ is less than $1/3$.
\end{proof}



\printbibliography

\appendix

\section{Lower Bound on Sample Complexity}
\label{appendix1}

We prove a lower bound for the sample complexity of $\rho$-independent set testing using the construction by Langberg \cite{Langberg04}.

\begin{proposition}
    Let $n$ be a sufficiently large constant. Let $\rho > 0$ and $\varepsilon > 0$ satisfy the following conditions:
    \begin{itemize}
        \item $\varepsilon \ll \dfrac{\rho^{q}}{2q!}$,
        \item $\dfrac{\rho^{2q-1}}{\varepsilon^{2}} \ll n$,
        \item $q^{2} \ll \rho n$,
    \end{itemize}
     
    Then there exists a hypergraph $H$ on $n$ vertices such that $H$ is $\varepsilon$-far from $\rho$-independent set property, and with probability at least $e^{-1/2}$, a random subhypergraph $H$ of size
    \[ s = \frac{\rho^{2q-1}}{8 \left((q-1)!\right)^{2} \varepsilon^{2}} \]
    contains an independent set of size $\rho s$.
\end{proposition}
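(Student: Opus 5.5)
We use the planted construction from Remark~\ref{remark:containerLemmaTightness}. Start from the complete $q$-uniform hypergraph on $V$, $|V|=n$, and plant an independent set $I$ of size $(\rho-\delta)n$, where $\delta = \frac{4(q-1)!\,\varepsilon}{\rho^{q-1}}$. Concretely, we delete every edge contained in $I$ and keep every other $q$-subset as an edge. The proof has two parts: first show that $H$ is $\varepsilon$-far, then show that a random sample of the stated size $s$ accidentally contains a $\rho s$ independent set with constant probability.

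\emph{Farness.} Let $T$ be any vertex set of size $\rho n$. Since $|I|<\rho n$, $T$ contains at least $\delta n$ vertices outside $I$, and any $q$-subset of $T$ with a vertex outside $I$ is an edge. So it suffices to count $q$-subsets of $T$ with exactly one vertex in $T\setminus I$ and $q-1$ in $T\cap I$. Given that $|T\setminus I|\ge \delta n$ and $|T\cap I|\le(\rho-\delta)n$, the worst case for this count is $|T\setminus I|=\delta n$, which yields at least
\[
\delta n\binom{(\rho-\delta)n}{q-1}\approx \frac{\delta\rho^{q-1}}{(q-1)!}\,n^q = 4\varepsilon n^q
\]
edges. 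Using $\varepsilon\ll\rho^q/(2q!)$ we have $\delta\ll\rho$, and together with $q^2\ll\rho n$ this guarantees $\binom{(\rho-\delta)n}{q-1}\ge \frac12\frac{(\rho n)^{q-1}}{(q-1)!}$. The surviving count is then at least $2\varepsilon n^q>\varepsilon n^q$. Since removing fewer than $\varepsilon n^q$ edges cannot destroy them all, no $\rho n$ set can become independent, so $H$ is $\varepsilon$-far.

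\emph{Sampling.} Let $S$ be a uniform sample of $s$ vertices. If $S$ has at least $\rho s$ vertices from $I$, then $S\cap I$ is already an independent set of size $\rho s$. We obtain a little more by noting that a set consisting of $I\cap S$ together with at most $q-1$ vertices outside $I$ is still independent, since every edge needs $q$ vertices, and at most $q-1$ of those could lie outside $I$ only if the remaining one is in $I$, which is not an edge. This slack can be used, but the clean route is to ignore it. Let $X=|S\cap I|$, which is hypergeometric with mean $(\rho-\delta)s$ and variance about $\rho(1-\rho)s\approx\rho s$ (assuming $\rho\le 1/2$). The deficit to overcome is
\[
\delta s=\frac{4(q-1)!\,\varepsilon}{\rho^{q-1}}\cdot s .
\]
With $s=\frac{\rho^{2q-1}}{8((q-1)!)^2\varepsilon^2}$ we have $\sqrt{\rho s}=\frac{\rho^{q}}{\sqrt8\,(q-1)!\,\varepsilon}$ and $\delta s=\frac{\rho^{q}}{2(q-1)!\,\varepsilon}$. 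Hence $\delta s=\sqrt2\cdot\sqrt{\rho s}$, so the deficit is only about $\sqrt2$ standard deviations.

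\emph{Main obstacle.} The hard step is to turn this into a rigorous anti-concentration bound with the explicit constant $e^{-1/2}$. A Gaussian tail at $\sqrt2$ standard deviations is about $0.08$, which is smaller than $e^{-1/2}$. I would therefore first re-tune the constants: plant $I$ of size $(\rho-c\delta)n$ with a smaller constant $c$, and use the fact that farness only needs the edge count to exceed $\varepsilon n^q$, since the factor $4$ above has slack. This makes the deficit a small fraction of a standard deviation. I would then apply a reverse Chernoff/Slud-type inequality or a local CLT for the hypergeometric distribution; the conditions $\rho^{2q-1}/\varepsilon^2\ll n$ (so $s\ll n$ and hypergeometric $\approx$ binomial) and $\rho s\to\infty$ justify the normal approximation. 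If the exact constant $e^{-1/2}$ still cannot be reached, the argument still gives some constant probability, which suffices for the $\Omega$ lower bound. To get $e^{-1/2}$ specifically, I would try the binomial anti-concentration estimate $\Pr[X\ge \mathbb{E}X+\lambda\sqrt{\mathrm{Var}}]\ge c\,e^{-\lambda^2/2}$ with carefully chosen $\lambda$.
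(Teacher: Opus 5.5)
Your construction is the paper's (its $|A|=(1-\frac{4(q-1)!\varepsilon}{\rho^{q}})\rho n$ equals your $(\rho-\delta)n$). Your route is also the paper's route: lower-bound the edges induced by a worst-case $\rho n$-set, then use a normal approximation for $X=|S\cap I|$. However, two local steps are wrong.

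In the farness part, the number $a\binom{\rho n-a}{q-1}$ of edges with exactly one vertex in $T\setminus I$, where $a=|T\setminus I|$, is \emph{not} minimized at $a=\delta n$. It is $0$ when $T\subseteq V\setminus I$ (possible for $\rho\le 1/2$), where every edge has several vertices outside $I$. Apply monotonicity to the total count instead: $|E(H[T])|=\binom{\rho n}{q}-\binom{|T\cap I|}{q}\ge\binom{\rho n}{q}-\binom{(\rho-\delta)n}{q}\ge\delta n\binom{(\rho-\delta)n}{q-1}$, which is essentially what the paper does. Also, your estimate $\binom{(\rho-\delta)n}{q-1}\ge\frac12\frac{(\rho n)^{q-1}}{(q-1)!}$ needs $q\delta\ll\rho$, not merely $\delta\ll\rho$; the hypothesis $\varepsilon\ll\rho^q/(2q!)$ does give $\delta\ll 2\rho/q$.

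Your ``slack'' remark is false. Every $q$-set not contained in $I$ is an edge, so adding any $w\notin I$ to $I\cap S$ creates the edge formed by $w$ and $q-1$ vertices of $I\cap S$. The largest independent set in $H[S]$ is exactly $\max(|S\cap I|,q-1)$. You discard the remark, so it does no harm, but it should be deleted.

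The substantive problem is your ``main obstacle'': no sharper anti-concentration tool can produce $e^{-1/2}$, because that constant is unattainable. The event $X\ge\rho s$ requires $X$ to exceed its mean, while $e^{-1/2}>1/2$. Concretely, with your correct computation $\rho s-\mathbb{E}[X]=\sqrt{2\rho s}$, the paper's own \Cref{lem2} gives $\Pr[X\ge\rho s]\le\exp\bigl(-2\rho s/(\rho s+\mathbb{E}[X])\bigr)<e^{-1}$.

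Re-tuning cannot fix this. Planting $(\rho-c\delta)n$ vertices keeps $H$ $\varepsilon$-far only if $c\ge 1/4$, because the $\rho n$-set $T\supseteq I$ spans at most $c\delta n\binom{\rho n-1}{q-1}\le 4c\,\varepsilon n^q$ edges. So the deficit $c\sqrt{2\rho s}$ stays at least about $0.35$ standard deviations, not a small fraction. The normal limit $\Pr[N(0,1)\ge c\sqrt{2/(1-\rho)}]$ then stays below $1/2$.

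The correct endpoint is therefore your fallback. With the original constants, the normal approximation (justified by $s\ll n$ and $\rho s\to\infty$) gives about $\Pr[N(0,1)\ge\sqrt{2/(1-\rho)}]$, roughly $0.08$ for small $\rho$. A positive constant is all the lower bound needs. This is also all the paper's proof shows: it concludes $\ge 1/20$, not $e^{-1/2}$. The paper writes the deficit as $\sqrt{\rho s}$, off by exactly the factor $\sqrt2$ you caught. Since $\sqrt{2/(1-\rho)}\le 3/2$ for $\rho\le 1/9$, its $1/20$ still holds for small $\rho$.
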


\begin{proof}
    Consider the $q$-uniform hypergraph $H = (V,E)$ defined as follows:
    \begin{enumerate}[i)]
        \item $|V| = n$,
        \item $V$ is partitioned into two disjoint sets $A$ and $V \setminus A$, where
            \[ |A| = \left(1 - \frac{4(q-1)!\varepsilon}{\rho^{q}}\right)\rho n, \]
        \item the edge set $E$ consists of all $q$-subsets of $V$ except those fully contained
            in $A$ (i.e., $A$ is an independent set).
    \end{enumerate}
    
    We first show that H is $\varepsilon$-far from $\rho$-independent set property, i.e., 
    every subset of vertices of size $\rho n$ induces a subhypergraph
    with at least $\varepsilon n^{q}$ edges.

    Consider any subset $S \subseteq V $ of size $\rho n$. We define $\delta := \frac{4(q-1)!\varepsilon}{\rho^{q}}$ so $|A| = (1-\delta)\rho n$.

    Thus,
    \begin{align*}       
        |E(S)| &\ge \binom{\rho n}{q} - \binom{\left(1 - \delta\right)\rho n}{q} \\
               &\ge \frac{(\rho n - q)^q}{q!} - \frac{((1 - \delta)\rho n)^q}{q!} \\
               &\ge \frac{(\rho n)^q}{q!} \left[ \left(1 - \frac{q}{\rho n}\right)^q - \left(1 - \delta\right)^q \right] \\
               &\ge \frac{(\rho n)^q}{q!} \left[ \left(1 - \frac{q^2}{\rho n}\right) - \left(1 - \frac{\delta q}{2}\right) \right] \tag{*} \label{eqn4} \\ 
               &\ge \frac{(\rho n)^q}{q!} \left[ \left(\frac{\delta q}{2} - \frac{q^2}{\rho n}\right) \right] \\
               &\ge \frac{(\rho n)^q}{q!} \frac{\delta q}{4} \\
    \end{align*}
    where \eqref{eqn4} follows from $(1-a)^x \geq 1 - ax$ and $(1-a)^x \leq e^{-ax} \leq (1-\frac{ax}{2})$ when $ax < 1$ using Taylor's expansions and the last inequality follows when $q < \frac{\delta \rho n}{4}$. 
    
    Substituting $\delta = \frac{4(q-1)!\varepsilon}{\rho^q}$ gives $|E(S)| \ge \varepsilon n^q$.

    \medskip
    Now we consider a sample of $s = \frac{\rho^{2q-1}}{8 \left((q-1)!\right)^{2} \varepsilon^{2}}$ vertices chosen randomly from the hypergraph $H$.
    We define $X := |S \cap A| = \sum_{v \in A} \mathbf{1}_{\{v \in S\}}$.
    Then $X$ follows a hypergeometric distribution with parameters $(n, |A|, s)$, and mean $\mu = \frac{|A|}{n} \cdot s = \left(1 - \frac{4 (q-1)! \varepsilon}{\rho^q} \right)\rho s.$

    Thus, using the Central Limit Theorem, we obtain that for sufficiently large $n$,
    \begin{align*}
        \Pr\!\left[|S \cap A| > \left(1 - \frac{4(q-1)!\varepsilon}{\rho^{q}}\right)\rho s + \sqrt{\rho s} = \rho s \right]
        \geq \Pr[N(0,1) > 3/2]
        \geq \frac{1}{20}
    \end{align*}

    Therefore, the chosen sample has an independent set of size $\rho s$ with some constant positive probability and the sample size $s = \frac{\rho^{2q-1}}{8 \left((q-1)!\right)^{2} \varepsilon^{2}}$ gives a lower bound for testing $\rho n$ independent set.
\end{proof}

\end{document}